\newtheorem{definition}{\textbf{Definition}}
\newtheorem{proposition}{\textbf{Proposition}}
\newtheorem{remark}{\textbf{Remark}}
\newtheorem{lemma}{\textbf{Lemma}}
\newtheorem{assumption}{\textbf{Assumption}}
\newtheorem{theorem}{\textbf{Theorem}}
\newcommand{\vSGNE}{\mbox{v-SGNE}\ }
\newcommand{\Definition}{\mbox{\textit{Definition}}}
\newcommand{\Assumption}{\mbox{\textit{Assumption}}}
\newcommand{\Remark}{\mbox{\textit{Remark}}}
\newcommand{\SYG}[1]{\textcolor{black}{#1}}
\newcommand{\SYR}[1]{\textcolor{black}{#1}}
\newcommand{\SY}[1]{\textcolor{black}{#1}}
\begin{document}
\title{Stochastic Generalized Dynamic Games with Coupled Chance Constraints}
\author{Seyed Shahram Yadollahi, Hamed Kebriaei, and Sadegh Soudjani
}
\maketitle
\begin{abstract}
\SYG{This paper investigates stochastic generalized dynamic games with coupling chance constraints, where agents have incomplete information about uncertainties satisfying a concentration of measure property. This problem, in general, is non-convex and NP-hard. To address this, we propose a convex under-approximation by replacing chance constraints with tightened expected-value constraints, yielding a tractable game. We prove the existence of a stochastic generalized Nash equilibrium (SGNE) in this new game and show that its variational SGNE is an $\boldsymbol{\varepsilon}$-SGNE for the original game, with $\boldsymbol{\varepsilon}$ expressed via the approximation errors and Lagrange multipliers. A semi-decentralized, sampling-based algorithm with time-varying step sizes is developed, requiring no prior knowledge of the uncertainty distribution or expectation evaluations. Unlike existing methods, it avoids step-size tuning based on Lipschitz constants or adaptive rules. Under standard assumptions on the pseudo-gradient, the algorithm converges almost surely to an SGNE.
}
\end{abstract}
 %%%%%%%%%%%%%
\begin{IEEEkeywords}
% \underline{Dynamic Games}\\
% \underline{ss-page.com.}
Stochastic generalized dynamic games, chance constraints,  sampling-based algorithm
\end{IEEEkeywords}
%%%%%%%%%%%%%%%%%%%%%%%%%
\section{Introduction}\label{sec: introduction}
Stochastic dynamic games \cite{Basar-zaccour-2018-handbook} have gained significant attention in recent decades \cite{Basar-zaccour-2018-handbook,song2024Decision}. Safety constraints, especially in multi-agent systems \cite{zhang2021physical}, pose additional challenges when agents face both local and coupling constraints. Analyzing such games requires the notion of a \emph{generalized Nash equilibrium} (GNE) \cite{facchinei2010generalized}, with applications in energy management \cite{yadollahi2023generalized} and autonomous vehicle control \cite{xie2023stochastic}. While GNE problems have been well studied in static and deterministic settings \cite{facchinei2010generalized,belgioioso2022distributed}, accounting for dynamics and uncertainties makes the problem more realistic yet substantially harder.

Stochastic generalized games have been studied in works such as \cite{ravat2011characterization,Franci-TAC-2021-Linear}. The former characterizes stochastic GNE (SGNE) via KKT conditions and stochastic variational inequalities but does not provide computational algorithms. \SYR{Subsequent works \cite{Franci-TAC-2021-Linear,franci2021stochastic,zou-Lygeros-ZOAlg_2023} developed semi-decentralized and distributed SGNE-seeking algorithms for settings where uncertainty affects \textit{only} objective functions, modeled as expected convex functions. These methods assume affine or separable convex coupling constraints and prove convergence under strong monotonicity or co-coercivity, with \cite{franci2021stochastic} relaxing this to monotonicity.} 
\SYR{The algorithm in \cite{franci2021stochastic} employs a fixed step size scaled by the inverse of an operator’s Lipschitz constant, whose computation can be challenging in nonlinear and stochastic settings.
}
Incorporating probabilistic guarantees has motivated the study of chance-constrained games. Works such as \cite{singh2019second} consider local chance constraints, while \cite{TomlinSojoudi2023} addresses coupling chance constraints using scenario-based methods and ADMM, but at high computational cost.
Dynamic extensions of stochastic generalized games (SGDGs) have also emerged. For example, \cite{reddy2019open} analyzed SGDGs with non-probabilistic affine coupling constraints, while \cite{zhong2022chance} proposed a two-stage algorithm for SGDGs with Gaussian uncertainty, approximating chance constraints linearly but without convergence guarantees. Vehicle safety applications have been studied using SGDGs with quadratic objectives and affine coupling chance constraints \cite{xie2023stochastic}. Energy management applications have also been explored: \cite{yadollahi2023generalized} applied SGDGs with quadratic costs and linear coupling chance constraints to microgrids with shared batteries, under assumptions on distributions’ support and moments, and proposed a deterministic reformulation.

Only a few works address dynamics or uncertainties in GNE analysis \cite{franci2021stochastic,Franci-TAC-2021-Linear,reddy2015open,yadollahi2023generalized}. The closest work to ours is \cite{franci2021stochastic}, which develops an SGNE-seeking algorithm under stochasticity in objectives. We extend this framework to SGDGs with stochastic dynamics and shared coupling chance constraints.
This paper studies a \emph{stochastic generalized dynamic game} (SGDG) with shared stochastic dynamics and coupling chance constraints. The dynamics are linear, time-varying, and influenced by players’ decisions and stochastic disturbances, whose distribution is unknown but belongs to a class satisfying a concentration of measure property. The SGDG is reformulated into a stochastic generalized Nash equilibrium (SGNE) problem, where the chance constraints are under-approximated by convex expected constraints. This yields an “under-approximated game” with expected costs and coupling constraints. We show that any SGNE of this game is also an $\varepsilon$-SGNE of the original SGDG. We further investigate the existence of such equilibria. Since the disturbance distribution is unknown and expectation computations are generally intractable, we propose a sample-based, semi-decentralized algorithm for computing an SGNE.
The main contributions are:
\begin{itemize}
\item First results on SGDG with coupling chance constraints under general uncertainty distributions satisfying concentration of measure, with more general cost and constraint structures than prior works \cite{franci2021stochastic,yadollahi2023generalized}. Unlike \cite{zhong2022chance,xie2023stochastic}, our method does not require full distributional knowledge or Gaussian assumptions.
\item Reformulation of the SGDG into an SGNE problem via convex under-approximation of chance constraints, and providing the proof that equilibria of the under-approximated game are $\varepsilon$-SGNEs of the original SGDG.
\item \SYR{We propose a semi-decentralized SGNE-seeking algorithm based on operator theory and stochastic approximation with a preset time-varying step size, eliminating the need for Lipschitz constant knowledge or adaptive estimation rules \cite{franci2021stochastic, malitsky2020golden}.
}
%A semi-decentralized algorithm for SGNE computation using operator theory and stochastic approximation.
\item Existence results for SGNE and proof of almost sure convergence of the algorithm under pseudo-gradient \mbox{monotonicity}.
\end{itemize}
\textbf{Notations}:
Let $\mathbb{N} = 0,1,2,\ldots$ denote the natural numbers, and $\mathbb{R}^m$ ($\mathbb{R}^m_{\geq 0}$) the (nonnegative) $m$-dimensional Euclidean space. For $x \in \mathbb{R}^m$, $\boldsymbol{A} \in \mathbb{R}^{m \times n}$, let $x^\top$, $\boldsymbol{A}^\top$ denote transposes. The inner product is $\langle x, y \rangle = x^\top y$, with norm $\|x\|_2 = \sqrt{x^\top x}$; the Frobenius norm is $\|\boldsymbol{A}\|_F$. 
For vectors \(x_i\), define \(\mathrm{col}(x_i)_{i=1}^{N} := [x_1^\top, \ldots, x_N^\top]^\top\), and let \((x_i)_{i=1}^{N}\) denote their sequence.
\section{Modeling Framework and Problem Statement}\label{sec: Modeling Framework and Problem Statement}
We consider a Stochastic Generalized Dynamic Game (SGDG) in discrete time with coupling chance constraints. 
Let the set of players be denoted by
\(
\mathcal{I} := \{1, 2, \dots, N\}
\), and the finite time horizon by \( T \).
At each time step \( t \in \{0,1,\ldots,T-1\} \), each player \( i \in \mathcal{I} \) selects a control action
\(
u^i_t \in \mathcal{D}^i_t \subseteq \mathbb{R}^{n_i},
\)where \( \mathcal{D}^i_t \) is a nonempty, compact, and convex set.
For each player \( i \), the full sequence of strategies over the horizon is given by
\(
u^i = \mathrm{col}(u^i_t)_{t=0}^{T-1} \in \mathcal{D}^i := \prod_{t=0}^{T-1} \mathcal{D}^i_t
\). The collective strategy profile of all players is
\(
u = \mathrm{col}(u^i)_{i=1}^{N} \in \mathcal{D} := \prod_{j=1}^N \mathcal{D}^j
\), and the strategy profile excluding player \( i \) is
\(
u^{-i} = \mathrm{col}(u^{j})_{j=1,j\neq i}^{N} \in \mathcal{D}^{-i} := \prod_{j \ne i} \mathcal{D}^j
\). The players' strategies affect the global state of the system denoted by \( s_t \in \mathbb{R}^{n_{s}} \) through the following stochastic difference equation:
\begin{equation}
    \label{eq: pure dynamic}
    s_{t+1} = \boldsymbol{A}_t s_t + \sum_{j=1}^N \boldsymbol{B}^j_t u^j_t + w_t,\,\,\forall t \in \{0,1,\ldots,T-1\},
\end{equation}
where \( \boldsymbol{A}_t \in \mathbb{R}^{n_{s} \times n_{s}} \) and \( \boldsymbol{B}^j_t \in \mathbb{R}^{n_{s} \times n_j} \) are time-dependent system matrices, \( w_t \in \mathbb{R}^{n_{s}} \) represents the \SYR{random} disturbance at time \( t \), \( w := \mathrm{col}\left(w_t \right)_{t=0}^{T-1}\), and \( s_0 \) is the initial state which is assumed to be known. The state profile is $s=\mathrm{col}(s_t)_{t=0}^{T}\in\mathbb{R}^{(T+1)n_s}$. \SYR{ 
The adopted information structure is open-loop: Each player commits at the initial time to a fixed profile of actions, meaning its actions at all time instants are predetermined \cite[sec. 2.2]{Basar-zaccour-2018-handbook}.
%players choose strategies at time \(t\) \textcolor{magenta}{based only on} the initial state \(s_{0}\) \cite{reddy2015open}. %committing at the outset to a predetermined sequence of actions \cite{Basar-zaccour-2018-handbook}. 
This paper further assumes that the distributions of the random variables are only partially known.
Each player \(i \in \mathcal{I}\) aims to minimize its own cost function
\begin{equation}
\label{eq: pure_cost_function}
\mathbb{J}^i(s_{0}, u^i, u^{-i}) =
\mathbb{E}_{w}\!\left[\hat{J}^{i}(s) + \tilde{J}^{i}(u^i, u^{-i})\right],
\end{equation}
by selecting an appropriate strategy \(u^i\), where \(\hat{J}^{i}\!:\mathbb{R}^{(T+1)n_{s}}\!\to\!\mathbb{R}\) is the state-dependent cost and
\(\tilde{J}^{i}\!:\mathbb{R}^{T\sum_j n_j}\!\to\!\mathbb{R}\) is direct interaction costs.}
All players must satisfy the shared safety chance constraints
\begin{equation}
\label{eq: chance constraint}
    \hspace{-0.2cm}\mathbb{P}\left\{\hat{\xi}^{j}(s) + \tilde{\xi}^{j}(u) \leq 0 \right\} \geq 1 - \gamma^j, \quad j\in \SYG{\mathcal{S} := \{1, \dots, m\}},
\end{equation}
where \( \hat{\xi}^{j}: \mathbb{R}^{(T+1)n_{s}} \rightarrow \mathbb{R} \) and \( \tilde{\xi}^{j}: \mathbb{R}^{T \sum_j n_j} \rightarrow \mathbb{R} \).
Here, \( \hat{\xi}^{j}(s) \), \( \forall j \in \mathcal{S} \), is a measurable function, and $\gamma^j \in \left(0,1\right)$ 
is a constant-violation tolerance for the $j^{\text{th}}$ chance constraint.
\SYG{\begin{assumption}
\label{ass: cost function in system model}
For each $i \in \mathcal{I}$, the cost $\mathbb{J}^{i}$ is well-defined; $\hat{J}^{i}(s)$ and $\tilde{J}^{i}(u^{i},u^{-i})$ are convex differentiable in $s$ and $u^{i}$, respectively.
For each $j \in \SYG{\mathcal{S}}$, $\hat{\xi}^{j}(s)$ and $\tilde{\xi}^{j}(u^{i},u^{-i})$ are likewise convex differentiable in $s$ and $u^{i}$, respectively.
\end{assumption}}
\SYR{
Using the recursion for the system state from \eqref{eq: pure dynamic}, the profile 
\(s\) can be compactly expressed as
\(
s=\boldsymbol{\Theta}s_0+\sum_{j=1}^{N}\boldsymbol{\Gamma}^j u^j+\boldsymbol{\Upsilon}w,
\)
with appropriate matrices \(\boldsymbol{\Theta}\),
\(\boldsymbol{\Gamma}^j\),
and \(\boldsymbol{\Upsilon}\).%derived from the system dynamics.
}
\SYG{Given the open-loop information structure, by substituting \( s \) with its equivalent expression into \( \hat{J}^{i}(s) \) in \eqref{eq: pure_cost_function} and \( \hat{\xi}^{j}(s) \) in \eqref{eq: chance constraint}, the introduced game can be reformulated as an \emph{equivalent} stochastic generalized game, denoted by \( \mathcal{G}_{1} \), where each agent \( i \in \mathcal{I} \) aim to choose a strategy, $u^{i}$, given the decision variables of the other agents, $u^{-i}$, and \textit{known parameter} $s_{0}$, to solves the following local optimization problem
\begin{align}\label{eq: each optimization of the reformulated game}
\begin{cases}
\hspace{0cm}\underset{u^i \in \mathcal{D}^{i}}{\min}\hspace{0.3cm} &\mathbb{E}_{w}\left[J^i \left(s_0, u^i, u^{-i}, w \right)\right]\\
\hspace{0cm}\textrm{s.t.} &\mathbb{P} \left\{ \bar{\xi}^j(s_0, u, w) \leq 0, \right\} \geq 1 - \gamma^j,\quad \forall j\in \mathcal{S}
\end{cases}
\end{align}
with \(J^i (s_0, u^i, u^{-i}, w ) = \hat{J}^{i} \big( \boldsymbol{\Theta} s_0 + \sum_{j=1}^{N} \boldsymbol{\Gamma}^j u^j + \boldsymbol{\Upsilon} w \big) 
+ \tilde{J}^{i}(u^i, u^{-i})\) and \( \bar{\xi}^j(s_0, u, w) = \hat{\xi}^{j} \big( \boldsymbol{\Theta} s_0 + \sum_{j=1}^{N} \boldsymbol{\Gamma}^j u^j + \boldsymbol{\Upsilon} w \big) + \tilde{\xi}^{j}(u) \), where both the cost functions and feasible strategy sets depend on the strategies of other players and on random variables. 
}
\SYG{
\begin{remark}\label{remark: convexity}
Under \Assumption~\ref{ass: cost function in system model}, since \( s \) is affine in \( u^i \) and expectation preserves convexity \cite[subsec. 3.2.1]{boyd2004convex}, \( J^i \), \( \mathbb{J}^{i} \), and \( \bar{\xi}^j \) are convex in \( u^i \), which is a standard requirement in chance-constrained optimisation/game-theoretic works\cite{soudjani2018concentration,TomlinSojoudi2023}.
\end{remark}}
The collective feasible set of strategies in \( \mathcal{G}_{1} \) can be defined as $\mathcal{U}_{\mathcal{G}_{1}}(s_0) =  \{u \in \mathcal{D} \mid \forall j\in \SYG{\mathcal{S}},\: \mathbb{P} \left\{  \bar{\xi}^j(s_0, u, w) \leq 0, \right\} \geq \mbox1 - \gamma^j\}$, and the feasible set for player \(i\) is as
$ \mathcal{U}_{\mathcal{G}_{1}}^{i} \!(s_{0},u^{-i}) \!=\! \{u^{i} \in \mathcal{D}^{i} \!\!\mid\! \exists u^{-i}, (s_{0},u^{i},u^{-i}) \in \mathcal{U}_{\mathcal{G}_{1}}(s_{0})\}
$.
\begin{definition} 
\label{def: varepsilon SGNE}
An \(\varepsilon-\)SGNE for \( \mathcal{G}_{1} \) is a strategy \( u_{\star} \in \mathcal{U}_{\mathcal{G}_{1}}\) such that for all \(i \in \mathcal{I}\),
\[
    \mathbb{J}^{i}(s_{0},u_{\star}^{i},u_{\star}^{-i}) \leq \inf \big\{ \mathbb{J}^{i}(s_{0},y,u_{\star}^{-i}) \mid y \in \mathcal{U}_{\mathcal{G}_{1}}^{i}(s_{0},u^{-i}_{\star}) \big\} + \varepsilon.\]
\end{definition} 

An \(\varepsilon-\)SGNE represents a strategy profile where no player can reduce their cost by more than \(\varepsilon\) through unilateral deviation. If \Definition~\ref{def: varepsilon SGNE} holds with $\varepsilon = 0$, then $u_{\star}$ is a SGNE for $\mathcal{G}_{1}$. In this paper, we aim to investigate $\varepsilon-$SGNE in $\mathcal{G}_{1}$. 
\section{Convex Under-Approximation of the SGDG} \label{sec: under approximated game}
Solving the original stochastic game $\mathcal{G}_{1}$ with non-convex chance constraints is NP-hard and generally intractable \cite{conitzer2008new}. We therefore propose a convex under-approximation based on a key assumption about the uncertainty's distribution.
\begin{assumption}[Concentration of Measure (CoM)]
 \label{ass: concentration of measure}
 The random variable \( w \) in \eqref{eq: each optimization of the reformulated game} is defined on the probability space \( \left( \Xi_{w}, \mathcal{F}_{w}, \mathbb{P}_{w} \right) \)  satisfying the inequality
    % \begin{equation}
    % \label{eq:concentration}
        $\mathbb{P}_{w}\left\{|\phi( w)-\mathbb{E}_{w}[\phi(w)]|\leq \theta \right\} \geq 1- h(\theta),\forall \theta\geq 0$,
    % \end{equation}
    for any Lipschitz continuous function \( \phi : \Xi_{w} \to \mathbb{R} \). Here, \( h : \mathbb{R}_{\geq 0} \to [0, 1] \) is a monotonically decreasing function, and for each \( j \in \SYG{\mathcal{S}} \), \( \bar{\xi}^{j}(s_0, u, \cdot) \) is  Lipschitz continuous. 
 \end{assumption}
The CoM phenomenon states that in a probability space, if a set has a measure of at least one-half, most points are close to it. Additionally, if a function on this space is sufficiently regular, the probability of it deviating significantly from its expectation (median) is low \cite{soudjani2018concentration}.
Assumption~\ref{ass: concentration of measure} is especially relevant in high-dimensional spaces, where probability tends to concentrate in small regions. Many distributions, including Gaussian and log-concave, have this property \cite{soudjani2018concentration}. For more details about the CoM (such as the form of $h(\cdot)$ for different distributions), we refer readers to \cite{soudjani2018concentration} and its references.

According to  \cite[Proposition 4]{soudjani2018concentration} and \Assumption~\ref{ass: concentration of measure}, the feasible domain of the chance constraints in \eqref{eq: each optimization of the reformulated game} includes the compact expected constraint
\(\mathbb{E}_{w}\left[g(s
_{0},u,w)\right] \leq 0\),
where \( g(s_{0},u,w) := \bar{\xi}(s_{0},u,w) + h^{-1}(\gamma) + \beta \) with \( \bar{\xi}(s_{0},u,w) = \mathrm{col}(\bar{\xi}^{j}(s_{0},u,w))_{j=1}^{m}\), \( \gamma := \mathrm{col}(\gamma^{j})_{j=1}^{m} \), and \( h^{-1}(\gamma) = \mathrm{col}(h^{-1}(\gamma^{j}))_{j=1}^{m} \). The vector \(\beta = \mathrm{col}\left(\beta^{j}\right)_{j=1}^{m}\), for all $\beta \geq 0$, can be empirically adjusted to control the tightness level of the under-approximation. 
\SYR{\SYG{Furthermore}, based on \Assumption~\ref{ass: cost function in system model} and similar to \Remark~\ref{remark: convexity}, it can be easily shown that this expected constraint represents certain convex constraints.}\\
\SYG{Based on the above discussion, we introduce a new \emph{stochastic generalized game}, denoted by $\mathcal{G}_{2}$. This game has a similar structure to $\mathcal{G}_{1}$, except that each player $i \in \mathcal{I}$ solves the following optimization problem:
}
\SYG{
\begin{align}\label{eq: optimization of under-approximated game}
\begin{cases}
\hspace{0cm}\underset{u^i \in \mathcal{D}^{i}}{\min}\hspace{0.3cm} &\mathbb{E}_{w}\left[J^i \left(s_0, u^i, u^{-i}, w \right)\right]\\
\hspace{0cm}\textrm{s.t.} &\mathbb{E}_{w}\left[g(s
_{0},u,w)\right] \leq 0.
\end{cases}
\end{align}}
\SYG{The \emph{collective} feasible strategy set for \( \mathcal{G}_{2} \), denoted by \( \mathcal{U}_{\mathcal{G}_{2}}(s_0) \), and the feasible strategy set for player \( i \), denoted by \( \mathcal{U}_{\mathcal{G}_{2}}^{i}(s_0, u^{-i}) \), are defined analogously to those in \( \mathcal{G}_{1} \).}
Note that $\mathcal{U}_{\mathcal{G}_{2}}(s_0)\subseteq \mathcal{U}_{\mathcal{G}_{1}}(s_0)$ and $\mathcal{U}_{\mathcal{G}_{2}}^{i}(s_0,u^{-i})\subseteq \mathcal{U}_{\mathcal{G}_{1}}^{i}(s_0,u^{-i})$.
\SYG{Although $\mathcal{G}_{2}$ is a convex \mbox{under-approximated} game for $\mathcal{G}_{1}$, the expectations in \eqref{eq: optimization of under-approximated game} cannot, in general, be exactly computed or expressed in closed (deterministic) form, since the \mbox{distributions} of the random variables are not fully known. Thus $\mathcal{G}_{2}$ is a stochastic game.
}
\begin{assumption}
\label{ass: second assumption about cost function}
The set \( \mathcal{U}_{\mathcal{G}_{2}} \) satisfies the Slater's constraint qualification. For each \( s_0 \in \mathbb{R}^{n_{s}} \) and \( w \in \Xi_{w} \), \( g(s_0, \cdot, w) \) is \( \ell_{g}(s_0, w) \)-Lipschitz continuous. \mbox{Additionally}, 
\( g(s_0, u, w) \)
and \( \nabla_{u} g(s_0, u, w) \) are bounded, meaning that \( \sup_{u \in \mathcal{U}_{\mathcal{G}_{2}}} \| g(s_0, u, w) \| \leq B_{1g}(s_0, w) \) and \( \sup_{u \in \mathcal{U}_{\mathcal{G}_{2}}} \| \nabla_{u} g(s_0, u, w) \| \leq B_{2g}(s_0, w) \). 
For each \( i \in \mathcal{I} \), \( s_0 \in \mathbb{R}^{n_{s}} \), \( u^{-i} \in \mathcal{D}^{-i} \), and \( w \in \Xi_{w} \), \( J^{i}(s_0, \cdot, u^{-i}, w) \) in \eqref{eq: optimization of under-approximated game} is \( \ell_{J^{i}}(s_0, u^{-i}, w) \)-Lipschitz continuous. 
The constants \( \ell_{g}(s_0, w) \), \( B_{1g}(s_0, w) \), \( B_{2g}(s_0, w) \)
, and \( \ell_{J^{i}}(s_0, u^{-i}, w) \), for all $i \in \mathcal{I}$, are integrable with respect to \( w \).
\end{assumption}

The SGNE of $\mathcal{G}_{2}$ is also defined according to \Definition~\ref{def: varepsilon SGNE}.
In the following, we show that there exists an SGNE of \( \mathcal{G}_{2} \), which is an \(\varepsilon\)-SGNE for \( \mathcal{G}_{1} \).

\subsection{Characterization of SGNE in \texorpdfstring{$\mathcal{G}_{2}$}{G2}}
We focus on a key subclass of SGNE in \(\mathcal{G}_{2}\), which \mbox{corresponds} to the solution set for \(u_\star\) in the following \mbox{stochastic} variational inequality (SVI): 
\begin{equation}
\label{eq: SVI}
    \langle \mathbb{F}_{\mathcal{G}_{2}}\left(s_{0},u_{\star}\right),u-u_{\star}\rangle \geq 0,\,\,  \forall u \in \mathcal{U}_{\mathcal{G}_{2}}(s_0),
\end{equation}
where $\mathbb{F}_{\mathcal{G}_{2}}$ is the pseudo-gradient mapping defined as
$\mathbb{F}_{\mathcal{G}_{2}}\left(s_{0},u\right) := \mathbb{E}_{w}\left[F_{\mathcal{G}_{2}}(s_{0},u^{i},u^{-i},w)\right]$, 
where $F_{\mathcal{G}_{2}}\left(s_{0},u^{i},u^{-i},w\right) = \mathrm{col}\left(\nabla_{u^{i}}J^{i}\left(s_{0},u^{i},u^{-i},w\right)\right)_{i=1}^{N}$.
Under \textit{Assumptions~\ref{ass: cost function in system model}--\ref{ass: second assumption about cost function}}, it follows from \cite[Proposition 12.7]{palomar2010convex} that any solution of $\mathrm{SVI}(\mathcal{U}_{\mathcal{G}_{2}},\mathbb{F}_{\mathcal{G}_{2}})$ in \eqref{eq: SVI} is an SGNE for $\mathcal{G}_{2}$ while vice versa does not hold in general.
\begin{proposition}\label{proposition: existence of vSGNE G3}
    If \textit{Assumptions~\ref{ass: cost function in system model}--\ref{ass: second
    assumption about cost function}} hold, the solution set of $\mathrm{SVI}(\mathcal{U}_{\mathcal{G}_{2}},\mathbb{F}_{\mathcal{G}_{2}})$ is not empty \cite[Corollary 2.2.5]{facchinei2003finite}.
\end{proposition}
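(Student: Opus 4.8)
The plan is to reduce the claim to the classical Hartman--Stampacchia existence theorem for finite-dimensional variational inequalities, which is precisely the content invoked in \cite[Corollary 2.2.5]{facchinei2003finite}: the problem $\mathrm{VI}(\mathcal{K},F)$ admits a solution whenever $\mathcal{K}$ is a nonempty, compact, convex subset of a finite-dimensional space and $F$ is continuous on $\mathcal{K}$. Thus the whole proof amounts to verifying these hypotheses for $\mathcal{K}=\mathcal{U}_{\mathcal{G}_{2}}(s_0)$ and $F=\mathbb{F}_{\mathcal{G}_{2}}(s_0,\cdot)$, after first noting that $\mathbb{F}_{\mathcal{G}_{2}}$ is single-valued (each $J^i$ is differentiable) and finite-valued (the integrability of $\ell_{J^i}(s_0,u^{-i},w)$ in Assumption~\ref{ass: second assumption about cost function} makes each block $\mathbb{E}_{w}[\nabla_{u^i}J^i]$ a well-defined finite vector).

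First I would establish the geometric properties of $\mathcal{U}_{\mathcal{G}_{2}}(s_0)$ as defined in \eqref{eq: collective feasible set mathcal_G2}. Nonemptiness is immediate from the Slater constraint qualification posited in Assumption~\ref{ass: second assumption about cost function}, since a strictly feasible point in particular lies in $\mathcal{U}_{\mathcal{G}_{2}}(s_0)$. For convexity, the set $\mathcal{D}$ is a finite Cartesian product of compact convex sets and hence convex, while the expected constraint set $\{u:\mathbb{E}_{w}[g(s_0,u,w)]\le 0\}$ is convex because $g(s_0,\cdot,w)$ is convex in $u$ --- which follows from Assumption~\ref{ass: chance constraint in system model} together with the affine dependence of $s$ on $u$ in \eqref{eq: recursive compact form of state} and the fact that adding the constants $h^{-1}(\gamma)+\beta$ and taking expectation both preserve convexity. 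For compactness I would use that $\mathcal{D}$ is compact, so it suffices to show the expected constraint set is closed; this reduces to the continuity of $u\mapsto\mathbb{E}_{w}[g(s_0,u,w)]$, after which the intersection of a compact set with a closed set is compact.

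The technical heart is the continuity of $\mathbb{F}_{\mathcal{G}_{2}}(s_0,\cdot)$ on $\mathcal{U}_{\mathcal{G}_{2}}(s_0)$. For each fixed $w$, the map $u\mapsto\nabla_{u^i}J^i(s_0,u^i,u^{-i},w)$ is continuous since $J^i$ is convex and differentiable by Assumptions~\ref{ass: cost function in system model} and \ref{ass: second assumption about cost function}. To pass continuity through the expectation I would invoke the dominated convergence theorem: Lipschitz continuity of $J^i(s_0,\cdot,u^{-i},w)$ bounds $\|\nabla_{u^i}J^i\|$ by $\ell_{J^i}(s_0,u^{-i},w)$, which Assumption~\ref{ass: second assumption about cost function} takes to be integrable in $w$, providing the integrable dominating function that yields continuity of each block $\mathbb{E}_{w}[\nabla_{u^i}J^i]$, and hence of $\mathbb{F}_{\mathcal{G}_{2}}(s_0,\cdot)$. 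The same device, applied with the integrable bounds $B_{1g}(s_0,w)$ and $B_{2g}(s_0,w)$, gives continuity of $u\mapsto\mathbb{E}_{w}[g(s_0,u,w)]$ and so closes the gap left in the compactness step above.

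Having verified that $\mathcal{U}_{\mathcal{G}_{2}}(s_0)$ is nonempty, compact, and convex and that $\mathbb{F}_{\mathcal{G}_{2}}(s_0,\cdot)$ is continuous, a direct appeal to \cite[Corollary 2.2.5]{facchinei2003finite} produces a solution of $\mathrm{SVI}(\mathcal{U}_{\mathcal{G}_{2}},\mathbb{F}_{\mathcal{G}_{2}})$, establishing that its solution set is nonempty. I expect the continuity of the expected pseudo-gradient to be the main obstacle: the regularity and integrability hypotheses of Assumption~\ref{ass: second assumption about cost function} are marshalled precisely to license the interchange of limit and expectation (the same mechanism already used in the remark on interchanging gradient and expectation), and one must confirm that the Lipschitz constants dominate the pointwise gradients uniformly enough for dominated convergence to apply across the whole feasible set.
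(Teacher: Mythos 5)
Your proposal is correct and takes the same route as the paper: the paper's entire proof is a bare citation of \cite[Corollary 2.2.5]{facchinei2003finite}, which is exactly the Hartman--Stampacchia existence result you reduce to, and your verification of its hypotheses (nonemptiness via Slater, convexity and compactness of $\mathcal{U}_{\mathcal{G}_{2}}$, continuity of $\mathbb{F}_{\mathcal{G}_{2}}$ via dominated convergence with the integrable Lipschitz constants as dominating functions) simply supplies the details the paper leaves implicit. The one caveat you flag yourself --- that the dominating function $\ell_{J^i}(s_0,u^{-i},w)$ varies with $u^{-i}$ along a convergent sequence, so a locally uniform integrable bound is really needed --- is a genuine subtlety, but it is one the paper's own one-line proof does not address either.
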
 

We call \emph{variational SGNE} (\vSGNE) those SGNE that are also solutions of the associated SVI, namely the solution of $\mathrm{SVI}(\mathcal{U}_{\mathcal{G}_{2}},\mathbb{F}_{\mathcal{G}_{2}})$ in \eqref{eq: SVI}. 
Given that \(\mathcal{U}_{\mathcal{G}_{2}}\) meets the Slater constraint qualification as stated in \textit{Assumption~\ref{ass: second assumption about cost function}}, based on  \cite[Proposition 1.3.4]{facchinei2003finite}, \( u_{\star} \) is a \vSGNE if and only if there exists a \( \overline{\lambda} \in \mathbb{R}^{m}_{\geq 0} \) such that the following KKT inclusions hold for any \( i \in \mathcal{I} \):
\begin{equation}
\label{eq: KKT conditions associated with each agent in vSGNE}
    \begin{cases}
        0 \in \mathbb{E}_{w}\left[\nabla_{u^{i}}J^{i}(s_{0},u_{\star}^{i},u_{\star}^{-i},w)\right] + \mathcal{N}_{\mathcal{D}^{i}}(u_{\star}^{i})
        \\ 
        \hspace{0.6cm} + \mathbb{E}_{w}\left[\nabla_{u^{i}}g(s_{0},u_{\star}^{i},u_{\star}^{-i},w)\right] \overline{\lambda}, \\
        0 \in -\mathbb{E}_{w}\left[g(s_{0},u_{\star}^{i},u_{\star}^{-i},w)\right] + \mathcal{N}_{\mathbb{R}^{m}_{\geq 0}}(\overline{\lambda}).
    \end{cases}
\end{equation}
Here normal cone operator $\mathcal{N}_{\mathcal{U}}(x)=\{v\mid \langle v,y-x\rangle\le0,\ \forall y\in\mathcal{U}\}$. \SYR{The interchangeability of the expected value and the gradient in $\mathbb{F}_{\mathcal{G}_{2}}$ and  \eqref{eq: KKT conditions associated with each agent in vSGNE} are guaranteed under \textit{Assumptions~\ref{ass: cost function in system model}}
and \textit{\ref{ass: second assumption about cost function}} \cite[Lemma 3.4]{ravat2011characterization}}. We recast the KKT conditions in \eqref{eq: KKT conditions associated with each agent in vSGNE} as a compact operator inclusion:
\begin{align}
\label{eq: KKT in operator and compact form}
    0 \in \mathcal{T}(u,\overline{\lambda}) := \mathcal{A}\left(u,\overline{\lambda}\right)+\mathcal{B}\left(u,\overline{\lambda}\right),
\end{align}
where
\begin{align}
\label{eq: mathcal_A and mathcal_B in centralized algorithm}
    \mathcal{A}:\begin{bmatrix}
        u\\
        \overline{\lambda}
    \end{bmatrix} &\mapsto \begin{bmatrix}
        \mathbb{F}_{\mathcal{G}_{2}}(s_{0},u)\\
        0
    \end{bmatrix} + \begin{bmatrix}
        \mathbb{E}_{w}\left[\nabla_{u}g(s_{0},u,w)\right]\overline{\lambda}\\
        -\mathbb{E}_{w}\left[g(s_{0},u,w)\right]
    \end{bmatrix}, \nonumber \\
    \mathcal{B}:\begin{bmatrix}
        u\\
        \overline{\lambda}
    \end{bmatrix} &\mapsto \begin{bmatrix}
        \mathcal{N}_{\mathcal{D}}(u)\\
        \mathcal{N}_{\mathbb{R}^{m}_{\geq 0}}(\overline{\lambda})
    \end{bmatrix}.
\end{align}
Thus, \( u_{\star} \) is a \vSGNE if and only if there exists a \( \overline{\lambda}_{\star} \in \mathbb{R}^{m}_{\geq 0} \) such that \( \mathrm{col}(u_{\star},\overline{\lambda}_{\star}) \in \mathrm{zer}(\mathcal{A} + \mathcal{B}) \).
\begin{assumption}
\label{ass: Monotonicity and Lipschitz of mathcal_F}
    The mapping \( \mathbb{F}_{\mathcal{G}_{2}} \) is monotone and \( \ell_{\mathbb{F}_{\mathcal{G}_{2}}} \)-Lipschitz continuous for some \( \ell_{\mathbb{F}_{\mathcal{G}_{2}}} > 0 \).
\end{assumption} 
\SYR{\begin{lemma}
\label{lem: equivalence of answer distributed and centralized algorithm (KKT)}
    Let \textit{Assumptions~\ref{ass: cost function in system model}--\ref{ass: Monotonicity and Lipschitz of mathcal_F}},
    hold. Then, we have: 
    \begin{inparaenum}[(i)]
        \item \( \mathrm{zer}(\mathcal{A} + \mathcal{B}) \neq \emptyset \),
        \item \( \mathcal{A} \) is monotone and \( \ell_{\mathcal{A}} \)-Lipschitz continuous, and \label{lem: monotonicity of A and Lipschitz}
        \item  \( \mathcal{B} \) is maximally monotone.
    \end{inparaenum}\\
    \begin{proof}
    The proof is similar to \cite[Lemma 1, 2]{franci2021stochastic} and \cite{ravat2011characterization}.
\end{proof}
\end{lemma}
}
\subsection{SGNE of \texorpdfstring{$\mathcal{G}_{2}$}{G2} as an \texorpdfstring{$\varepsilon$}{epsilon}-SGNE of \texorpdfstring{$\mathcal{G}_{1}$}{G1}}
In this subsection, we focus on establishing a relation between the solution of the under-approximated game $\mathcal{G}_{2}$ and that of the actual game $\mathcal{G}_{1}$. 
We demonstrate that the \vSGNE of $\mathcal{G}_{2}$ is an \(\varepsilon-\)SGNE for \(\mathcal{G}_1\) while \(\varepsilon\) is bounded with a duality-like gap induced by the under-approximation of the feasible set of \(\mathcal{G}_1\).
\SYG{
\begin{theorem}\label{theorem: Epsolon-newVersion}
\SYG{Let us define  \(
        M_{i}^{j} := \mathrm{sup}_{u^{i} \in \mathcal{U}_{\mathcal{G}_{1}}^{i}} \|1 - \gamma^{j}- \mathbb{P}\left\{\bar{\xi}^{j}(s_{0},u^{i},u_{\star}^{-i},w)\leq 0 \right\}- \mathbb{E}_{w}\left[g^{j}(s_0,u^i,u_{\star}^{-i},w)\right]\|,
\)
}
for all $j \in \SYG{\mathcal{S}}$ and $i \in \mathcal{I}$, and let \textit{Assumptions \ref{ass: cost function in system model}--\ref{ass: second assumption about cost function}} hold. If \( u_{\star} \) is a \vSGNE of the game \( \mathcal{G}_{2} \), and $\overline{\lambda}_{\SYG{\star}}$ is its corresponding Lagrangian multiplier. Then, \( u_{\star} \) is an \( \varepsilon \)-SGNE for the game \( \mathcal{G}_1 \) with \( \varepsilon  = \max \{\varepsilon^{i} \mid i \in \mathcal{I}\}\), such that for each $i \in \mathcal{I}$, $\varepsilon^{i}$ is as  $\varepsilon^{i} = \sum_{j=1}^{m}\overline{\lambda}_{\star}^{j} M_{i}^{j}$.
\end{theorem}}
\begin{proof}
    Let $\overline{\lambda}_{\star} = \mathrm{col}\left(\overline{\lambda}_{\star}^{j}\right)_{j=1}^{m} \in \mathbb{R}^{m}_{\geq 0}$. For each $i \in \mathcal{I}$ and $u^{i} \in \mathcal{U}_{\mathcal{G}_{1}}^{i}\left(s_{0},u^{-i}_{\star}\right)$, we have
    \begin{align*}
        &\mathbb{J}^{i}(s_{0},u^{i},u_{\star}^{-i}) \geq  \mathbb{J}^{i}(s_{0},u^{i},u_{\star}^{-i}) +  \\
        &+\sum_{j=1}^{m} \Bigg\{ \overline{\lambda}_{\star}^{j}\Bigg( \mathbb{E}_{w}\left[g^{j}(s_0,u^i,u_{\star}^{-i},w)\right] - \mathbb{E}_{w}\left[g^{j}(s_0,u^i,u_{\star}^{-i},w)\right] \\
        & + \underbrace{1 - \gamma^{j}- \mathbb{P}\left\{\bar{\xi}^{j}(s_{0},u^{i},u_{\star}^{-i},w)\leq 0 \right\}}_{\leq 0} \Bigg)\Bigg\} \\
        &\geq \mathbb{J}^{i}(s_{0},u^{i},u_{\star}^{-i}) +  \sum_{j=1}^{m} \overline{\lambda}_{\star}^{j} \Bigg(\mathbb{E}_{w}\left[g^{j}(s_0,u^i,u_{\star}^{-i},w)\right] - M_{i}^{j} \Bigg).
    \end{align*}
    Multiplying both sides of the above inequality by a minus sign and subsequently adding \(\mathbb{J}^{i}(s_{0}, u_{\star}^{i}, u_{\star}^{-i})\) to the both sides, \SYR{given that $\mathcal{U}_{\mathcal{G}_{1}}^{i}\left(s_{0},u^{-i}_{\star}\right) \subseteq \mathcal{D}^{i}$, we obtain}
    \SYR{\begin{align}
    \label{eq: second equation of epsilon proof}
        &\mathbb{J}^{i}(s_{0}, u_{\star}^{i}, u_{\star}^{-i}) - \mathbb{J}^{i}\left(s_{0},u^{i},u_{\star}^{-i}\right) \nonumber \\ 
        &\leq \mathbb{J}^{i}(s_{0}, u_{\star}^{i}, u_{\star}^{-i})+\sum_{j=1}^{m}\overline{\lambda}_{\star}^{j} M_{i}^{j} -  \nonumber \\
        &\hspace{0.3cm}\Bigg(\mathbb{J}^{i}(s_{0},u^{i},u_{\star}^{-i}) + \sum_{j=1}^{m} \overline{\lambda}_{\star}^{j} \mathbb{E}_{w}\left[g^{j}(s_0,u^i,u_{\star}^{-i},w)\right]\Bigg) \nonumber \\
        & \leq \mathbb{J}^{i}(s_{0}, u_{\star}^{i}, u_{\star}^{-i})+\sum_{j=1}^{m}\overline{\lambda}_{\star}^{j} M_{i}^{j} -  \text{inf}_{u^{i} \in \mathcal{D}^{i}} \Big[\mathbb{J}^{i}(s_{0},u^{i},u_{\star}^{-i}) \nonumber\\
        &
        \quad+ \sum_{j=1}^{m} \overline{\lambda}_{\star}^{j} \mathbb{E}_{w}\left[g^{j}(s_0,u^i,u_{\star}^{-i},w)\right]\Big] \leq \sum_{j=1}^{m}\overline{\lambda}_{\star}^{j} M_{i}^{j}.
    \end{align}}  
    \SYR{where the last inequality is deduced from the strong duality property of the players' best response in $\mathcal{G}_{2}$.} Finally, setting the right-hand side of \eqref{eq: second equation of epsilon proof} to $\varepsilon^{i}$ and letting $\varepsilon = \max\{\varepsilon^1, \dots, \varepsilon^N\} $,
    we get that $u_{\star}$ is a $\varepsilon-$SGNE for $\mathcal{G}_{1}$.
\end{proof}
\begin{remark}\label{remark: epsilon interpretation}
    \SYG{If \( M^{j}_{i} = 0 \)} for all \( j \in \SYG{\mathcal{S}} \) and $i \in \mathcal{I}$, then we can conclude that \( \mathcal{U}_{\mathcal{G}_{1}}^{i}(s_{0},u^{-i}_{\star}) = \mathcal{U}_{\mathcal{G}_{2}}^{i}(s_{0},u^{-i}_{\star}) \) for all \( i \in \mathcal{I} \). \SYG{Moreover, $\overline{\lambda}_{\star}^{j}$ can be interpreted as the maximum shadow price of under-approximating the $j^{\mathrm{th}}$ constraint of $\mathcal{G}_{1}$; and $\varepsilon$ as the maximum cost of under-approximation, which can be imposed on a player for the result of $\mathcal{G}_{2}$. In Section~\ref{sec: Proposed SGNE Seeking Algorithm}, we propose an algorithm for finding $u_{\star}$ and $\overline{\lambda}_{\star}$.} 
\end{remark}
\section[SGNE Seeking Algorithm for G2]{SGNE Seeking Algorithm for \texorpdfstring{$\mathcal{G}_{2}$}{G2}}
\label{sec: Proposed SGNE Seeking Algorithm}
In this section, we propose \SYR{a modified stochastic version of the
Golden Ratio Algorithm\cite{malitsky2020golden}} to obtain the \vSGNE of $\mathcal{G}_{2}$, which corresponds to the zeros of the operator \( \mathcal{T} = \mathcal{A}+\mathcal{B} \) as shown in \eqref{eq: KKT in operator and compact form}--\eqref{eq: mathcal_A and mathcal_B in centralized algorithm}. 
Since computing the expected-value mappings in \( \mathcal{A} \) requires knowledge of the distributions of the random variables, which are unknown, we approximate \( \mathcal{A} \) by
\begin{align}
    \widehat{\mathcal{A}}:
    \begin{bmatrix}
        (u, \boldsymbol{w})\\
        \overline{\lambda}
    \end{bmatrix} \mapsto
    &\begin{bmatrix}
        \widehat{F}_{\mathcal{G}_{2}}(s_{0}, u, \boldsymbol{w})\\
        0
    \end{bmatrix} + 
    \begin{bmatrix}
        \widehat{H}(s_{0}, u, \boldsymbol{w}) \overline{\lambda}\\
        -\widehat{G}(s_{0}, u, \boldsymbol{w})
    \end{bmatrix},
\end{align}
where \( \widehat{F}_{\mathcal{G}_{2}}\), \(\widehat{H}\), and \(\widehat{G} \) are approximations of \( \mathbb{F}_{\mathcal{G}_{2}}\), \(\mathbb{H}\), and \(\mathbb{E}_{w}[g] \) using samples \( \boldsymbol{w} \), respectively. Here,
$
\mathbb{H}(s_{0}, u) = \mathbb{E}_{w}\left[ H(s_{0}, u,w) \right],
$
with
$
H(s_{0}, u,w) =\text{col}\big( H^{i}(s_{0}, u, w)\big)_{i=1}^{N} := \text{col}\big( \nabla_{u^{i}}g(s_{0}, u^{i}, u^{-i},w)\big)_{i=1}^{N}$. 
For brevity, we omit $s_0$ from the arguments when not essential.
We propose Algorithm \ref{alg: suggested algorithm} to compute a \vSGNE of \( \mathcal{G}_{2} \) in a \SYR{sampling-based} semi-decentralized manner. At iteration \( k \):\\
%\begin{enumerate}
\begin{inparaenum}
    \item \textit{Coordinator Step:} The coordinator collects \( u^{i}_{(k)} \) from all players, draws samples $\boldsymbol{w}^{0}_{(k),[1]}, \dots, \boldsymbol{w}^{0}_{(k),[M_{(k)}]}$, and approximates \( \mathbb{E}_{w}[g(s_{0}, u_{(k)}, w)] \) (see \eqref{eq: approximation of Eg at each iteration}). It then computes \( \tilde{\lambda}_{(k)} \in \mathbb{R}^{m} \) via an averaging step with inertia (see \eqref{eq: averaging step lambda that computing lambda tilde at each iteration}), updates \( \overline{\lambda}_{(k+1)} \) by projection step \eqref{eq: compute lambda at each iteration}, and broadcasts it to all players.
    \item \textit{Players Step:} each player \( i \) uses local samples $ \boldsymbol{w}^{i}_{(k),[1]}, \dots, \boldsymbol{w}^{i}_{(k),[M_{(k)}]}$ to approximate \( \mathbb{F}_{\mathcal{G}_{2}}^{i} \) and \( \mathbb{H}^{i} \) (see \eqref{eq: computing F hat i, at each iteration k}, \eqref{eq: compute Lambda hat i at each iteration k}), computes \( \tilde{u}^{i}_{(k)} \) via averaging step \eqref{eq: averaging step u that compute tilde u}, and then updates \( u^{i} \) through a projection step \eqref{eq: projection step u}.
    The updated values are then sent back to the coordinator and other players.
\end{inparaenum}\\ 
    In Algorithm~\ref{alg: suggested algorithm}, $\boldsymbol{w}^{i}_{\left(k\right)} = \big(\boldsymbol{w}^{i}_{\left(k\right),[l]}\big)_{l=1}^{M_{(k)}}$, for $i=0,1,\cdots,N$ and $k \in \mathbb{N}$.
    The samples \( \boldsymbol{w}^{i}_{(k),[l]} \) are drawn independently from \( \mathbb{P}_{w} \), and the step-size \( \alpha_{(k)} \) is positive.
\begin{algorithm}
\caption{Semi-decentralized sampling-based computation of the \vSGNE of $\mathcal G_2$.}
\begin{algorithmic}[1]
\STATE \textbf{Initialization:} $u^{i}_{(0)}, \tilde{u}^{i}_{(-1)} \in \mathcal{D}^{i}$, and $  \overline{\lambda}_{(0)},  \tilde{\lambda}_{(-1)} \in \mathbb{R}_{\geq 0}^m$, $M_{(k)}$, $\delta$, $\alpha_{(k)}$.
\STATE \textbf{Iteration $k$:} \\
\: (1) \textbf{Coordinator}: Receive $u^{i}_{(k)}$ for all $i \in \mathcal{I}$, and  update:
\begin{equation}
\label{eq: approximation of Eg at each iteration}
\hspace{-0.7cm}\widehat{G}(s_{0}, u_{(k)}, \boldsymbol{w}^{0}_{(k)}) = \frac{\sum_{l=1}^{M_{(k)}} g\left(s_{0}, u_{(k)}, \SYR{\boldsymbol{w}^{0}_{(k),[l]}}\right)}{M_{(k)}},    
\end{equation}
\begin{equation}
\label{eq: averaging step lambda that computing lambda tilde at each iteration}
\tilde{\lambda}_{(k)} = (1-\delta)\overline{\lambda}_{(k)} + \delta \tilde{\lambda}_{(k-1)},   
\end{equation}
\begin{equation}
\label{eq: compute lambda at each iteration}
\hspace{-0.3cm}\overline{\lambda}_{(k+1)} = \text{proj}_{\mathbb{R}_{\geq 0}^m}\bigg\{\tilde{\lambda}_{(k)} + \alpha_{\left(k\right)} \hat{G}\left(s_{0},u_{(k)},\boldsymbol{w}^{0}_{(k)}\right)\bigg\}.
\end{equation}
\: (2) \textbf{Player} $i$: Receive $\overline{\lambda}_{(k)}$ and update:
\begin{align}
    & \hspace{-0.7cm}\widehat{F}_{\mathcal{G}_{2}}^{i}(s_{0}, u_{(k)}, \boldsymbol{w}^{i}_{(k)}) = \frac{\sum_{l=1}^{M_{(k)}} F_{\mathcal{G}_{2}}^{i}(s_{0}, u_{(k)}, \SYR{\boldsymbol{w}^{i}_{(k),[l]}})}{M_{(k)}}, \label{eq: computing F hat i, at each iteration k}\\
    & \hspace{-0.7cm}\widehat{H}^{i}(s_{0}, u_{(k)}, \boldsymbol{w}^{i}_{(k)}) = \frac{\sum_{l=1}^{M_{(k)}} H^{i}(u^{i}_{(k)}, u^{-i}_{(k)}, \SYR{\boldsymbol{w}^{i}_{(k),[l]}})}{M_{(k)}},\label{eq: compute Lambda hat i at each iteration k}\\
  & \tilde{u}^i_{(k)} = (1-\delta)u^i_{(k)} + \delta \tilde{u}^i_{(k-1)},
  \label{eq: averaging step u that compute tilde u}\\
& u^i_{(k+1)} = \text{proj}_{\mathcal{D}^i} \bigg[ \tilde{u}^i_{(k)} - \alpha_{\left(k\right)}\bigg(\widehat{F}^i_{\mathcal{G}_{2}}\left(u^{i}_{(k)}, u^{-i}_{(k)}, \boldsymbol{w}^{i}_{(k)}\right) \nonumber\\
&\qquad\qquad\qquad\qquad +\widehat{H}^{i}\left(s_{0},u_{(k)},\boldsymbol{w}^{i}_{(k)}\right) \overline{\lambda}_{\left(k\right)}\bigg)\bigg].
\label{eq: projection step u}
\end{align}
\STATE \textbf{Output:} Strategies $u^i$ for each player and \SYR{the Lagrangian multiplier $\overline{\lambda}$}.
\end{algorithmic}
\label{alg: suggested algorithm}
\end{algorithm}
Common assumptions for using approximations \eqref{eq: approximation of Eg at each iteration}, \eqref{eq: computing F hat i, at each iteration k}, and \eqref{eq: compute Lambda hat i at each iteration k} include selecting an appropriate batch size sequence \( M_{(k)} \) \cite{iusem2017extragradient, franci2021stochastic}.
\begin{assumption}[\SYR{increasing batch size}]
\label{ass: batch size assumption}
The batch sizes \( (M_{(k)})_{k \geq 1} \) satisfy $M_{(k)} \geq c(k + k_{0})^{a+1}$,
for some constants \( c, a > 0 \), and \(k_{0}>1\).
\end{assumption}
\smallskip
From \textit{Assumption~\ref{ass: batch size assumption}}, \( 1/M_{(k)} \) is summable, a standard \mbox{assumption} for using variance reduction techniques to control stochastic errors \cite{iusem2017extragradient}. 
\subsection{Convergence analysis of Algorithm \ref{alg: suggested algorithm}}
\label{susec: Convergence Analysis of Algorithm}
In this subsection, we analyze the convergence of \mbox{Algorithm~\ref{alg: suggested algorithm}}. To this end, we derive an upper bound for the variance of the approximation error of \( \mathcal{A} \) in \textit{Proposition~\ref{prop: expectation e_{(k)}^{2} such that filtration}}. Utilizing this proposition along with a supporting lemma, we demonstrate that the proposed algorithm converges almost surely (a.s.) to the \vSGNE of \( \mathcal{G}_{2} \).

Let us define \( z := \text{col}(u, \overline{\lambda}) \) and \(\tilde{z}:=\text{col}(\tilde{u},\tilde{\lambda})\). Algorithm~\ref{alg: suggested algorithm} can be written in compact form as
\begin{subequations}
\label{eq: a compact form of the algorithm in operator space}
\begin{align}
    \tilde{z}_{(k)} &= (1 - \delta)z_{(k)} + \delta \tilde{z}_{(k-1)}, \label{eq: a compact form of the algorithm in operator space-averaging step}\\
    z_{(k+1)} &= (\text{Id} + \alpha_{\left(k\right)} \mathcal{B})^{-1}\left(\tilde{z}_{(k)} - \alpha_{\left(k\right)} \widehat{\mathcal{A}}(z_{(k)})\right),\label{eq: a compact form of the algorithm in operator space-projection step}
\end{align}
\end{subequations}
where $\text{Id}(.)$ denotes the identity operator. Since we are using an estimate of the operator $\mathcal{A}\left(z_{(k)}\right)$, we need to characterize the effect of the estimation error on the convergence of the algorithm. 
Hence, we define $\boldsymbol{e}_{(k)}:=\widehat{\mathcal{A}}\left(z_{(k)},w_{(k)}\right) - \mathcal{A}\left(z_{(k)}\right) $ as the estimation error on the extended operator. 
We can represent $\boldsymbol{e}_{(k)}$ in terms  of the estimation error in the elements of $\mathcal{A}\left(s_{0},z_{(k)}\right)$ which are
\begin{equation}\label{eq: different errors definition}
    \SYR{e_{i,(k)} = \sum_{l=1}^{M_{(k)}}\frac{e_{i,(k),[l]}}{M_{(k)}},\quad i = 1,2,3,\:\:\:\:k\in \mathbb{N},}
\end{equation}
\SYR{where
\(e_{1,(k),[l]} = F_{\mathcal{G}_{2}}(s_0,u_{(k)},w_{(k),[l]}) - \mathbb{F}_{\mathcal{G}_{2}}(s_0,u_{(k)})\),
\(e_{2,(k),[l]} = H(s_0,u_{(k)},w_{(k),[l]}) - \mathbb{H}(s_0,u_{(k)})\), and 
\(e_{3,(k),[l]} = g(s_0,u_{(k)},w_{(k),[l]}) - \mathbb{E}[g(s_0,u_{(k)},w)]\).}
By substituting \eqref{eq: different errors definition} into the definition of $\boldsymbol{e}_{(k)}$, we get
$\boldsymbol{e}_{(k)} = \mathrm{col}\left( e_{1,(k)} + e_{2,(k)} \overline{\lambda}_{(k)}, -e_{3,(k)}\right)$.
\\
Let  $\mathcal{F}^{w}_{(k)} := \sigma\big\{X_{0},(w_{(k')})_{k'=0}^{k-1}\big\}$, for all $k \in \mathbb{N}\backslash\{0\}$, and $\mathcal{F}^{w}_{(0)} := \sigma\{X_{0}\}$.
In the stochastic framework, there are usually assumptions on the expected value and the variances of stochastic errors $e_{i,(k),[l]}$ \cite{iusem2017extragradient,franci2021stochastic}.
\SYR{
\begin{assumption}
\label{ass: unbiased sampling}
    For all $i=1,2,3$, $k \in \mathbb{N}$ and $1 \leq l \leq M_{(k)}$, we assume a.s., $\mathbb{E}\left[e_{i,(k),[l]} \mid \mathcal{F}^{w}_{(k)}\right] = 0$.
\end{assumption}
}
\SYR{\begin{assumption} \label{ass: bounded of variances of errors}
    There exist some $\sigma_{1},\sigma_{2},\sigma_{3} >~ 0$, such that, for $i = 1,2,3$, $k \in \mathbb{N}$, and $l = 1,2,\cdots,M_{(k)}$,
    $\mathrm{sup}_{u \in \mathcal{U}_{\mathcal{G}_2}}\mathbb{E}\left[\left\|e_{i,(k),[l]}\right\|_{F}^{2}\right]~\le~ \sigma_i^2$.
\end{assumption}}
\SYR{
\begin{proposition}\label{Proposition: variance bound of each error}
 For all $k \in \mathbb{N}$ and $i \in \{1,2,3\}$,  if Assumption~\ref{ass: bounded of variances of errors} holds, we have $\mathbb{E}[\|e_{i,(k)}\|^{2} \mid \mathcal{F}_{(k)}] \leq 4 \frac{\sigma_{i}^{2}}{M_{(k)}}$.
\end{proposition}
\begin{proof}
    See Appendix.
\end{proof}
}
\begin{proposition} \label{prop: expectation e_{(k)}^{2} such that filtration}
    For all $k \in \mathbb{N}$, if Assumption~\ref{ass: bounded of variances of errors} holds, we have
    %%%%%%%
   $
        \mathbb{E} \left[ \left\|  \boldsymbol{e}_{(k)} \right\|^{2} \mid  \mathcal{F}^{w}_{(k)}  \right]
        \leq 
        \frac{4}{M_{\left(k\right)}}\bigg[  \left( 1 + \left\| \overline{\lambda}_{(k)}  \right\|  \right) \sigma_{1}^{2}  +  \left(  \left\|  \overline{\lambda}_{(k)} \right\|^{2} + \left\|  \overline{\lambda}_{(k)} \right\|  \right) \sigma_{2}^{2} + \sigma_{3}^{2}    \bigg]$.
\end{proposition}
\begin{proof}
    Based on definition of $\boldsymbol{e}_{(k)}$, we have
\begin{align*}
        &\left\|  \boldsymbol{e}_{(k)}  \right\|^{2} = \boldsymbol{e}_{(k)}^{\top} \boldsymbol{e}_{(k)} = e_{1,(k)}^{\top} e_{1,(k)} + 
        \overline{\lambda}_{(k)}^{\top}e_{2,(k)}^{\top} e_{2,(k)} \overline{\lambda}_{(k)} \nonumber\\
        & +2 \overline{\lambda}_{(k)}^{\top}e_{2,(k)}^{\top}e_{1,(k)} + e_{3,(k)}^{\top}e_{3,(k)}
        % \nonumber \\
        % &
        \leq \left\| e_{1,(k)} \right\|^{2} +
        \nonumber \\ &
        \left\|  \overline{\lambda}_{(k)} \right\|^{2} \left\|  e_{2,(k)} \right\|^{2}  
        +2 \left\|  \overline{\lambda}_{(k)}  \right\|   \left\|  e_{2,(k)} \right\| \left\| e_{1,(k)} \right\|  
         + \left\|  e_{3,(k)} \right\|^{2},
    \end{align*}
    where the last inequality is obtained by applying the Cauchy-Schwarz inequality. 
    Moreover, since 2$\left\|  e_{2,(k)} \right\| \left\| e_{1,(k)} \right\| \leq  \left\| e_{1,(k)} \right\|^2+\left\| e_{2,(k)} \right\|^2$, we obtain 
    \(
        \left\|  \boldsymbol{e}_{(k)}  \right\|^{2} \leq 
         \left(  1+\left\|  \overline{\lambda}_{(k)}  \right\|  \right) \Big\{\left\|  e_{1,(k)} \right\|^{2} +\left\|  \overline{\lambda}_{(k)} \right\|   \left\|  e_{2,(k)} \right\|^{2} \Big\} 
         +  \left\| e_{3,(k)} \right\|^{2}\).
    Then, by taking conditional expectations from both sides of the last inequality, we have
\begin{equation*}
    \begin{aligned}
    %\label{eq: proof proposition errors eq1}
        &\mathbb{E} \left[  \left\| \boldsymbol{e}_{(k)}  \right\|^{2}  \mid \mathcal{F}^{w}_{(k)}  \right] \leq \left(  1+\left\| \overline{\lambda}_{(k)}  \right\| \right) \bigg\{\mathbb{E}\left[   \left\|  e_{1,(k)} \right\|^{2}  \mid \mathcal{F}^{w}_{(k)}  \right]  \nonumber \\
        & \qquad + \left\| \overline{\lambda}_{(k)}  \right\|   \mathbb{E}\left[  \left\|e_{2,(k)}  \right\|^{2}   \mid \mathcal{F}^{w}_{(k)}  \right] \bigg\}  + \mathbb{E}\left[  \left\|  e_{3,(k)}  \right\|^{2}   \mid \mathcal{F}^{w}_{(k)} \right].
    \end{aligned}
\end{equation*}
\SYR{Now, by applying \textit{Proposition~\ref{Proposition: variance bound of each error}} to the right-hand side of the above inequality, we obtain the proposition result.}
\end{proof}
Finally, we indicate how to choose the parameters of the algorithm. This is fundamental for convergence analysis and, in practice, for the convergence speed.
\begin{assumption}[\SYR{Averaging Parameter}]\label{ass: averaging parameter}
    The averaging parameter $\delta$ in \eqref{eq: a compact form of the algorithm in operator space} is such that 
    %\begin{align*}
    \(
        \frac{1}{\phi} \leq \delta < 1,
        \)
    where $\phi = \frac{1+\sqrt{5}}{2}$ is the golden ratio.
\end{assumption}
\SYR{\begin{assumption}[Diminishing Non-Summable Step size]
\label{ass: step size alpha}
    The sequence $\left(\alpha_{(k)}\right)_{k \in \mathbb{N}}$ is a positive decreasing sequence such that $\sum_{k=0}^{\infty}\alpha_{\left(k\right)} = \infty$ and  
    $\mathrm{lim}_{k \to \infty}\frac{\alpha_{(k)}}{\alpha_{(k-1)}}\neq 0$.
\end{assumption}}
\SYR{
\begin{remark} \label{remark: step size advatages and examples}
In \textit{Assumption~\ref{ass: step size alpha}}, unlike \cite{franci2021stochastic,malitsky2020golden}, our approach does not require prior knowledge of the Lipschitz constant  $\ell_{\mathcal{A}}$, nor the use of adaptive procedures to approximate it at each iteration, which are often computationally expensive. Some typical step sizes that satisfy the conditions in \textit{Assumption~\ref{ass: step size alpha}} include \(\alpha_{(k)} = c/\sqrt{k+a}\) and \(\alpha_{(k)} = c/(k+a)\), where \mbox{\(c,a > 0\)} are constant. 
\end{remark}
}
Let us define the residual function of $z_{(k)}$ as 
\(
\mathrm{res}\left(z_{\left( k \right)} \right) = \left\|   z_{\left( k \right)} - \left( \mathrm{Id}+\alpha_{\left(k\right)} \mathcal{B} \right)^{-1}\left( z_{\left( k \right)} - \alpha_{\left(k\right)} \mathcal{A}\left( z_{\left( k \right)} \right) \right)   \right\|.
\)
\begin{lemma}\label{lem: main lemma about inequality of residual}
     Let \textit{Assumption~\ref{ass: cost function in system model}--\ref{ass: step size alpha}} hold, and $\left( z_{(k)},\tilde{z}_{(k)} \right)_{k \in \mathbb{N}}$ be generated by Algorithm~\ref{alg: suggested algorithm}. Then, the following inequalities hold:   
\begin{equation}
     %\begin{aligned}
     \label{eq: inequality residual and varpi_k and others 1}
	\left\| \tilde{z}_{\left( k \right)} \!-\! z_{\left( k \right)} \right\|^{2} \geq \frac{1}       {4}\mathrm{res}\left( z_{\left( k \right)} \right)^{2}\! -\! \frac{1}{2}\left\| \Delta z_{(k+1)}   \right\|^{2}
    \!-\! \alpha_{(k)}^{2} \left\|   e_{\left( k \right)} \right\|^{2},
\end{equation}
 \begin{align} \label{eq: main inequality third reformulation0}
 &\frac{2}{1-\delta}\left\| \Delta \tilde{z}_{(k+1)}^{\star} \right\|^{2} 
	+ \frac{\alpha_{(k)}}{\alpha_{(k-1)}}\frac{1}{\delta}\left\| \Delta z_{(k+1)} \right\|^{2} 
    \nonumber \\
	&
    \leq \frac{2}{1-\delta}\left\| \Delta \tilde{z}_{(k)}^{\star} \right\|^{2} -\frac{\alpha_{(k)}}{\alpha_{(k-1)}}\frac{2}{\delta}\left\| z_{\left( k \right)}-\tilde{z}_{\left( k \right)} \right\|^{2}  \nonumber\\
	& + 2\alpha_{(k)} \ell_{\mathcal{A}}\left( \left\| \Delta z_{(k)} \right\|^{2}+\left\| \Delta z_{(k+1)} \right\|^{2} \right)  \\
	& + 2\alpha_{(k)} \left( \left\| \Delta e_{(k)} \right\|^{2}+ \left\|  \Delta z_{(k+1)}   \right\|^{2} 
-2\big\langle     e_{\left( k \right)},\Delta z_{(k)}^{\star}   \big\rangle\right),\nonumber
\end{align}
    where $\Delta z_{(k+1)} = z_{(k+1)} - z_{(k)}$, $\Delta \tilde{z}_{(k)}^{\star} = \tilde{z}_{(k)}-z_{\star}$, $\Delta z_{(k)}^{\star} = z_{(k)}-z_{\star}$, and $\Delta e_{(k)} = e_{(k)} - e_{(k-1)}$.
\end{lemma}
\begin{proof}
    See Appendix.
\end{proof}
\begin{theorem}\label{theorem: convergence}
    Let \textit{Assumption~\ref{ass: cost function in system model}--\ref{ass: step size alpha}} hold. Then, the sequence $(u_{(k)})_{k \in \mathbb{N}}$ generated by Algorithm~\ref{alg: suggested algorithm}
    converges a.s. to a \vSGNE of $\mathcal{G}_{2}$.
    \end{theorem}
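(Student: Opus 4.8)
The plan is to build a stochastic Lyapunov function from the two preparatory lemmas, show it is an almost-supermartingale, and then invoke the Robbins--Siegmund theorem to obtain simultaneously the a.s.\ convergence of the Lyapunov function and the summability of the per-iteration progress. Fix a pair $z_{\star}=\mathrm{col}(u_{\star},\overline{\lambda}_{\star})\in\mathrm{zer}(\mathcal{A}+\mathcal{B})$, which exists by Lemma~\ref{lem: equivalence of answer distributed and centralized algorithm (KKT)}, whose $u$-block is the target \vSGNE. Guided by the left-hand side of Lemma~\ref{lem: the second main lemma in convergence that gives second main inequlaity}, I would track
\[
V_{(k)} := \frac{2}{1-\delta}\bigl\| \tilde{z}_{(k)}-z_{\star} \bigr\|^{2} + \frac{\alpha_{(k-1)}}{\alpha_{(k-2)}}\,\frac{1}{\delta}\bigl\| \Delta z_{(k)} \bigr\|^{2},
\]
so that the inequality of Lemma~\ref{lem: the second main lemma in convergence that gives second main inequlaity} can be read as a descent relation $V_{(k+1)}\le V_{(k)}-b_{(k)}+(\text{error terms})$.

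The first technical step is the bookkeeping of the squared-increment terms. The right-hand side of Lemma~\ref{lem: the second main lemma in convergence that gives second main inequlaity} carries $2\alpha_{(k)}\ell_{\mathcal{A}}\bigl(\|\Delta z_{(k)}\|^{2}+\|\Delta z_{(k+1)}\|^{2}\bigr)$ and an extra $2\alpha_{(k)}\|\Delta z_{(k+1)}\|^{2}$; these must be absorbed into the two increment contributions carried by $V_{(k)}$ and $V_{(k+1)}$. The step-size bound $\alpha_{(k)}\le \tfrac{1}{4\delta(2\ell_{\mathcal{A}}+1)}$, together with $\alpha_{(k)}/\alpha_{(k-1)}$ being bounded away from $0$ and at most $1$ (both from Assumption~\ref{ass: step size alpha}), is precisely what guarantees that after absorption the net coefficient of each increment stays nonnegative and a strictly positive multiple of $\|z_{(k)}-\tilde{z}_{(k)}\|^{2}$ survives. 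Invoking Lemma~\ref{lem: main lemma about inequality of residual} I would then convert this surviving term into a positive multiple of $\mathrm{res}(z_{(k)})^{2}$, up to an $O(\alpha_{(k)}^{2}\|e_{(k)}\|^{2})$ correction that is itself controlled by Proposition~\ref{prop: expectation e_{(k)}^{2} such that filtration}, producing the nonnegative quantity $b_{(k)}$ to be extracted by Robbins--Siegmund.

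Next I would take conditional expectation given $\mathcal{F}^{w}_{(k)}$. Since $z_{(k)}$ (hence $\Delta z_{(k)}^{\star}$) is $\mathcal{F}^{w}_{(k)}$-measurable while $\mathbb{E}[e_{(k)}\mid\mathcal{F}^{w}_{(k)}]=0$ by Remark~\ref{ass: Expectation of errors with respect to filtration}, the inner-product term $-4\alpha_{(k)}\langle e_{(k)},\Delta z_{(k)}^{\star}\rangle$ vanishes in conditional expectation. The remaining stochastic terms, bounded via $\|\Delta e_{(k)}\|^{2}\le 2\|e_{(k)}\|^{2}+2\|e_{(k-1)}\|^{2}$, are controlled by Proposition~\ref{prop: expectation e_{(k)}^{2} such that filtration}, giving a bound proportional to $\bigl(1+\|\overline{\lambda}_{(k)}\|+\|\overline{\lambda}_{(k)}\|^{2}\bigr)/M_{(k)}$. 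The main obstacle is exactly that this error bound grows quadratically in the dual iterate $\overline{\lambda}_{(k)}$, whose boundedness is not assumed a priori, so the error cannot be treated as a plain summable additive perturbation. I would resolve this by feeding $\|\overline{\lambda}_{(k)}\|^{2}$ back into the Lyapunov function: from the averaging recursion \eqref{eq: averaging step lambda that computing lambda tilde at each iteration} one has $\|\overline{\lambda}_{(k)}\|^{2}\le C'(V_{(k)}+V_{(k-1)}+1)$, so the error takes the form $\tfrac{C}{M_{(k)}}\bigl(V_{(k)}+V_{(k-1)}+1\bigr)$. Because $(1/M_{(k)})$ is summable by Assumption~\ref{ass: batch size assumption}, this yields an almost-supermartingale inequality $\mathbb{E}[V_{(k+1)}\mid\mathcal{F}^{w}_{(k)}]\le (1+a_{(k)})V_{(k)}-b_{(k)}+c_{(k)}$ with $\sum_{k}a_{(k)}<\infty$ and $\sum_{k}c_{(k)}<\infty$, where the dependence on $V_{(k-1)}$ is handled by carrying a two-step potential $V_{(k)}+V_{(k-1)}$ through the recursion.

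Finally, Robbins--Siegmund gives that $V_{(k)}$ converges a.s.\ to a finite limit and $\sum_{k}b_{(k)}<\infty$ a.s.; since $b_{(k)}$ is a multiple of $\mathrm{res}(z_{(k)})^{2}$ with coefficient bounded away from $0$ (using $\lim_{k}\alpha_{(k)}/\alpha_{(k-1)}\neq0$), this forces $\mathrm{res}(z_{(k)})\to0$ a.s., and the convergence of $V_{(k)}$ shows $(z_{(k)})$ is a.s.\ bounded with $\|\tilde{z}_{(k)}-z_{(k)}\|\to0$. By the Lipschitz continuity of $\mathcal{A}$ and the maximal monotonicity of $\mathcal{B}$ from Lemma~\ref{lem: monotone and Lipschitz barMathcal_A and maximally monotone barMathcal_B}, the forward--backward residual map is continuous, so every a.s.\ limit point $z_{\infty}$ of $(z_{(k)})$ satisfies $\mathrm{res}(z_{\infty})=0$, i.e.\ $z_{\infty}\in\mathrm{zer}(\mathcal{A}+\mathcal{B})$, whose $u$-block is a \vSGNE of $\mathcal{G}_{2}$; here one must take care that the step size does not collapse too fast, which is exactly what the conditions on $(\alpha_{(k)})$ in Assumption~\ref{ass: step size alpha} secure. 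To upgrade subsequential to full-sequence convergence, I would observe that the construction of $V_{(k)}$ and its a.s.\ convergence hold for \emph{every} fixed $z_{\star}\in\mathrm{zer}(\mathcal{A}+\mathcal{B})$; a standard stochastic quasi-Fej\'er / Opial argument (passing to a countable dense subset of the solution set to handle measurability) then forces $(\tilde{z}_{(k)})$, and hence $(z_{(k)})$ and $(u_{(k)})$, to converge a.s.\ to a single (random) \vSGNE of $\mathcal{G}_{2}$.
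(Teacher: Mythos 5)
Your proposal follows the same skeleton as the paper's proof: plug the residual lower bound of Lemma~\ref{lem: main lemma about inequality of residual} into the descent inequality of Lemma~\ref{lem: the second main lemma in convergence that gives second main inequlaity}, kill the cross term $\langle \boldsymbol{e}_{(k)},\Delta z_{(k)}^{\star}\rangle$ in conditional expectation via Remark~\ref{ass: Expectation of errors with respect to filtration}, bound the variance terms with Proposition~\ref{prop: expectation e_{(k)}^{2} such that filtration}, and close with Robbins--Siegmund. The genuine differences are in two places, and in both you are more careful than the paper. First, the quadratic dependence of the error bound on $\|\overline{\lambda}_{(k)}\|$: the paper simply declares that ``it is obvious that $\overline{\lambda}$ is bounded during the algorithm,'' fixes $\|\overline{\lambda}_{(k)}\|\leq\overline{\lambda}_{\mathrm{max}}$, and then applies Robbins--Siegmund with a purely deterministic summable perturbation $\eta_{(k)}$ (never using the $(1+\chi_k)$ factor of Lemma~\ref{lem:convergence_sequence}); no justification for this boundedness is given, and it is the weakest step of the published argument. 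You instead feed $\|\overline{\lambda}_{(k)}\|^{2}$ back into the Lyapunov function via the averaging recursion, which is legitimate since $\overline{\lambda}_{(k)}=\frac{1}{1-\delta}\bigl(\tilde{\lambda}_{(k)}-\delta\tilde{\lambda}_{(k-1)}\bigr)$, and this is exactly what the multiplicative $(1+\chi_k)$ slack in Robbins--Siegmund is for. One technical caveat: the plain two-step potential $V_{(k)}+V_{(k-1)}$ does not satisfy the required recursion (the term $V_{(k)}$ appearing on both sides produces a factor $2V_{(k)}$, not $(1+\chi_k)$ times the potential); you need a weighted variant such as $V_{(k)}+a_{(k+1)}V_{(k-1)}$ with $a_{(k)}\propto 1/M_{(k)}$ summable, which does close. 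Second, the endgame: the paper's proof stops at $\mathrm{res}(z_{(k)})^{2}\to 0$, $\|\tilde{z}_{(k)}-z_{(k)}\|\to 0$, and boundedness of $(\tilde{z}_{(k)})$, leaving the passage to actual a.s.\ convergence of $(u_{(k)})$ to a point of $\mathrm{zer}(\mathcal{A}+\mathcal{B})$ implicit; your limit-point argument plus the quasi-Fej\'er/Opial step over a countable dense subset of the solution set supplies precisely this missing piece. Do note, however, a lacuna that your sketch shares with the paper: the residual $\mathrm{res}(z_{(k)})$ is defined with the step size $\alpha_{(k)}$ inside, and Assumption~\ref{ass: step size alpha} permits $\alpha_{(k)}\to 0$ (indeed the simulation uses $\alpha_{(k)}\propto 1/k$), in which case $\mathrm{res}(z_{(k)})\to 0$ alone carries little information; making the limit-point step rigorous requires relating $\mathrm{res}_{\alpha_{(k)}}$ to a fixed-step residual (e.g., via the monotonicity of $\alpha\mapsto\mathrm{res}_{\alpha}(z)/\alpha$), which neither your proposal nor the paper spells out.
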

\begin{proof}
By applying the lower bound of $\left\| \tilde{z}_{(k)} - z_{(k)} \right\|^2$ from \eqref{eq: inequality residual and varpi_k and others 1} in \eqref{eq: main inequality third reformulation0}, we obtain:
\begin{align*}
 & \frac{2}{1-\delta}\left\| \Delta \tilde{z}_{\left( k+1 \right)}^{\star} \right\|^{2} + 4\alpha_{(k)}\left\langle  e_{\left( k \right)},\Delta z_{\left( k \right)}^{\star}  \right\rangle \nonumber \\
        &\quad +\left(  \frac{\alpha_{(k)}}{\alpha_{(k-1)}}\frac{1}{2\delta} -  2\alpha_{(k)}\left(\ell_{\mathcal{A}} +1 \right)\right)\left\| \Delta z_{\left( k+1 \right)} \right\|^{2} \nonumber \\
	&\leq \frac{2}{1-\delta}\left\| \Delta \tilde{z}_{\left( k \right)}^{\star} \right\|^{2}- \frac{\alpha_{(k)}}{\alpha_{(k-1)}}\frac{1}{4\delta}\mathrm{res}\left( z_{\left( k \right)} \right)^{2}  \nonumber \\
        &\quad  -\frac{\alpha_{(k)}}{\alpha_{(k-1)}}\frac{1}{\delta}\left\| \tilde{z}_{\left( k \right)} - z_{\left( k \right)} \right\|^{2} +\frac{\alpha_{(k)}}{\alpha_{(k-1)}}\frac{1}{\delta}\alpha_{(k)}^{2}\left\| e_{\left( k \right)} \right\|^{2}\nonumber \\
	&\quad  + 2\ell_{\mathcal{A}}\alpha_{(k)}\left\| \Delta z_{\left( k \right)} \right\|^{2} + 2\alpha_{(k)} \left\| \Delta e_{\left( k \right)} \right\|^{2} .
\end{align*}
Replacing $\left\| \Delta e_{\left( k \right)} \right\|^{2} $ with its upper bound $
     2\left\| e_{\left( k \right)} \right\|^{2}+ 2 \left\| e_{\left( k-1 \right)} \right\|^{2}$,
taking the expectation from both sides, and then using \textit{Proposition~\ref{prop: expectation e_{(k)}^{2} such that filtration}} and \textit{Assumption~\ref{ass: unbiased sampling}}
\begin{align}\label{eq: Fundumental inequality before applying lack of Lipschitz knowledge}
	&\mathbb{E}\bigg[   \frac{2}{1-\delta}\left\| \Delta \tilde{z}_{\left( k+1 \right)}^{\star} \right\|^{2}  \nonumber \\ 
    & +\left( \frac{\alpha_{(k)}}{\alpha_{(k-1)}}\frac{1}{2\delta} -2\alpha_{(k)}\left(\ell_{\mathcal{A}} +1 \right) \right) \left\| \Delta z_{\left( k+1 \right)} \right\|^{2}   \mid   \mathcal{F}^{w}_{\left( k \right)}\bigg]  \nonumber \\
        &+ \frac{\alpha_{(k)}}{\alpha_{(k-1)}}\frac{1}{4\delta}\mathrm{res}\left( z_{\left( k \right)} \right)^{2} + \frac{\alpha_{(k)}}{\alpha_{(k-1)}}\frac{1}{\delta}\left\|\tilde{z}_{(k)}- z_{(k)}\right\|^2  \nonumber \\
	&\leq \frac{2}{1-\delta}\left\|  \Delta \tilde{z}_{\left( k \right)}^{\star}  \right\|^{2} + 2\ell_{\mathcal{A}} \alpha_{(k)} \left\| \Delta z_{\left( k \right)} \right\|^{2} \nonumber \\
    & +\left(\frac{\alpha_{(k)}^{2}}{\delta}+4\alpha_{(k)} \right) \frac{4}{M_{(k)}} 
        \bigg\{ \left( 1+ \left\| \overline{\lambda}_{(k)} \right\| \right)\sigma_{1}^{2} \nonumber  \\
        & +\left( \left\| \overline{\lambda}_{(k)} \right\|^{2}+\left\| \overline{\lambda}_{(k)} \right\| \right) \sigma_{2}^{2}+  \sigma_{3}^{2} \bigg\} \nonumber \\
        & + \frac{16\, \alpha_{(k)}}{M_{(k-1)}} 
        \bigg\{ \left( 1+ \left\| \overline{\lambda}_{(k-1)} \right\| \right)\sigma_{1}^{2}   \nonumber \\
        & +\left( \left\| \overline{\lambda}_{(k-1)} \right\|^{2}+\left\| \overline{\lambda}_{(k-1)} \right\| \right) \sigma_{2}^{2}+  \sigma_{3}^{2} \bigg\}.
\end{align}
\SYR{
Since $\alpha_{(k)} \to 0$, there exist $k_{1} \geq 0$, such that $\alpha_{(k)} \leq \frac{1}{2\, \delta \, (4 \ell_{\mathcal{A}}+2)}$, for all $k \geq k_{1}$. Thus, we have
\(\frac{1}{2\, \delta} - 2\alpha_{(k)}(\ell_{\mathcal{A}}+1) \geq 2 \ell_{\mathcal{A}}\, \alpha_{(k)}, \forall k \ge k_{1}
\).} Moreover, using \textit{Assumptions~\ref{ass: averaging parameter}--\ref{ass: step size alpha}} and \textit{Assumption~\ref{ass: second assumption about cost function}} it can be easily shown that $\bar{\lambda}_{(k)}\leq \bar{\lambda}_{\mathrm{max}} < \infty$.
Now
using 
\textit{Proposition~\ref{prop: expectation e_{(k)}^{2} such that filtration}} and \eqref{eq: Fundumental inequality before applying lack of Lipschitz knowledge}
yields 
\begin{equation}\label{eq: inequality match to Robbins Siegmund lemma}
    \mathbb{E}\left[ v_{(k+1)} \mid \mathcal{F}_{(k)}^{w} \right] + \theta_{(k)} 
        \leq
         v_{(k)} + \eta_{(k)},\hspace{0.5cm} \forall \SYR{k\geq k_{1}}
\end{equation}
where,
\begin{equation*}
\begin{aligned}
	v_{\left( k \right)}&=\frac{2}{1-\delta}\left\| \Delta \tilde{z}_{\left( k \right)}^{\star} \right\|^{2}+ 2\ell_{\mathcal{A}}\alpha_{(k)} \left\| \Delta z_{\left( k \right)}\right\|^{2}, \nonumber \\
	\theta_{\left( k \right)} &= \frac{\alpha_{(k)}}{\alpha_{(k-1)}}\left[\frac{1}{4 \delta}\mathrm{res}\left( z_{\left( k \right)} \right)^{2} + \frac{1}{\delta}\left\|\tilde{z}_{(k)}- z_{(k)}\right\|^2\right],  \nonumber \\
	\eta_{\left( k \right)} &=  \left(\frac{\alpha_{(k)}^{2}}{\delta}+4\alpha_{(k)} \right) \frac{4}{M_{(k)}} 
    \Big\{ \left( 1+ \overline{\lambda}_{\mathrm{max}} \right)\sigma_{1}^{2} \nonumber \\
    & +\left( \overline{\lambda}_{\mathrm{max}}^{2}+\overline{\lambda}_{\mathrm{max}} \right) \sigma_{2}^{2}+  \sigma_{3}^{2} \Big\}  + \frac{16\, \alpha_{(k)}}{M_{(k-1)}} 
    \Big\{ \left( 1+ \overline{\lambda}_{\mathrm{max}}\right)\sigma_{1}^{2} \nonumber \\
    & +\left( \overline{\lambda}_{\mathrm{max}}^{2}+\overline{\lambda}_{\mathrm{max}} \right) \sigma_{2}^{2}+  \sigma_{3}^{2} \Big\}.
\end{aligned}
\end{equation*}
Based on \textit{Assumption~\ref{ass: step size alpha}}, and \textit{Assumption~\ref{ass: batch size assumption}}, it can be easily shown that $\sum_{k = 0}^{\infty} \eta_{(k)} < \infty$.
Now, using the Robbins-Siegmund lemma (see \cite{robbins1971convergence}), we conclude that \(v_{(k)}\) converges and that \(\sum_{k} \theta_{(k)} < \infty\). Given that \(\{\theta_{(k)}\}\) is non-negative and summable, it follows that \(\lim_{k \to \infty} \theta_{(k)} = 0\). Consequently, it implies \(\tilde{z}_{(k)} \to z_{(k)}\) and \SYR{\(\text{res}\left(z_{(k)}\right)^2 \to 0\)}. Moreover, we can find that \SYR{$\text{res}\left(\tilde{z}_{(k)}\right)^2 \to 0$}, and the sequence \(\{\tilde{z}_{(k)}\}\) is bounded.
\end{proof}
\section{Simulation Results}\label{sec: simulation}
\SYG{We study a grid-connected community microgrid with $N$ residential players over a horizon $T$ and a time step $\Delta t$. The shared battery state of charge (SoC), $SoC_t$, evolves according to
$
\mathrm{SoC}_{t+1}=\mathrm{SoC}_t+\eta\,\Delta t\left(r_t-\sum_{j=1}^N u^j_t\right)
$, where $u^j_t$ is household $j$'s discharge decision, $r_t$ is stochastic renewable generation, and $\eta\in(0,1]$ is the charge/discharge efficiency. Household $i$'s grid exchange is $g^i_t:=d^i_t-u^i_t$ for known demand $d^i_t$. Let the aggregate grid exchange be $G_t:=\sum_{j=1}^N g^j_t$ and the tariff be $
\pi_t \;=\; K^{\mathrm{ToU}}_t \;+\; \frac{k_c}{N}\,G_t$, with $K^{\mathrm{ToU}}_t$ the conventional time-of-use price and $k_c$ a congestion coefficient. Each player $i$ chooses $u^i=\{u^i_t\}_{t=0}^{T-1}$ to minimize  $\mathbb{J}^i = \mathbb{E} \big\{ \sum_{t=0}^{T-1} \big[ \pi(g_t)g_t^i + \sum_{j=1}^{N}(\alpha^{\mathrm{dgr}}(u_t^j)^2 + \beta^{\mathrm{dgr}}u_t^j) \big] - \alpha^{\mathrm{utl}}\ln (1+\sum_{t=0}^{T-1}g_t^i) + \frac{\alpha^{\mathrm{trm}}}{2}\|\mathrm{SoC}_T - \mathrm{SoC}^{\mathrm{des}}\|^2 \big\}$
where $\alpha^{\mathrm{dgr}},\beta^{\mathrm{dgr}}$ weight battery degradation, $\alpha^{\mathrm{utl}}$ weights user utility, and $\alpha^{\mathrm{trm}}$ weights the terminal SoC tracking toward the desired level $\mathrm{SoC}^{\mathrm{des}}$. Decisions satisfy local bounds $0\le u^i_t\le d^i_t$ and shared chance constraints
$
\mathbb{P}\{\mathrm{SoC}^{\min}\le \mathrm{SoC}^t\le \mathrm{SoC}^{\max}\}\ge 1-\hat{\gamma}$ and 
$\mathbb{P}\{|\mathrm{SoC}_T-\mathrm{SoC}^{\mathrm{des}}|\le c\}\ge 1-\tilde{\gamma},
$
with $\mathrm{SoC}^{\min},\mathrm{SoC}^{\max}$ the operating bounds, $c$ a terminal tolerance, and $\hat{\gamma},\tilde{\gamma}$ the violation probabilities.
The parameters are set as \(N=20\), \(T = 24\), \(\Delta t=1\), \(\eta = 5\times 10^{-5}\), \(\hat{\gamma} = 0.1\), \(\tilde{\gamma}= 0.1\), \(k_c = 1\), \(\alpha^{\mathrm{dgr}}= 80\), \(\beta^{\mathrm{dgr}}= 10\), \(\alpha^{\mathrm{utl}}= 50\), \(\alpha^{\mathrm{trm}} = 1\), \(\mathrm{SoC}_{0} = 0.1\), \(\mathrm{SoC}^{\mathrm{min}} = 0.1\), \(\mathrm{SoC}^{\mathrm{max}} = 0.9\), \(\mathrm{SoC}^{\mathrm{des}} = 0.5\), \(c = 0.05\). The demand profile for the players is adopted from \cite{yadollahi2023generalized}. We use the conventional ToU values $K^{\mathrm{ToU}}_t=\{15.3,\,35.6,\,23.3,\,45.6,\,27.6\}$ over the bands $[0\!-\!4], [5\!-\!14], [15\!-\!16], [17\!-\!21], [22\!-\!24]$. The batch size sequence is \(M_{(k)} = \lceil (k+2)^{1.1} \rceil\). 
Algorithm parameters are \(\delta = 0.9\) and \(\alpha_{(k)} =  10^{-4} /(k+2)\).}

\SYR{Figure~\ref{fig:convergence u} shows the convergence of \(u^{i}_{21}\) and \(u^{i}_{16}\) with 95\% confidence bands from 100 independent Monte Carlo simulations. The bands generally narrow as the algorithm advances, with transient widening early on, indicating almost sure convergence to the SGNE of the under-approximated game as iterations approach infinity.\\
Figure~\ref{fig:res} illustrates the convergence of \(\mathbb{E}[\mathrm{res}(z_{(k)})]\) and \(\mathbb{E}[\mathrm{res}^2(z_{(k)})]\) towards zero, although this convergence is not strictly monotonic.
Figure~\ref{fig:res} also shows convergence with fixed step size $10^{-4}$ but divergence at $0.01$, revealing that fixed step-size schemes converge only if the step size stays below a problem-dependent threshold. Excessively small steps can slow convergence, especially with large Lipschitz constants.
Despite divergence at $\alpha_{(k)}=0.01$, Figure~\ref{fig:res} confirms convergence with diminishing step sizes such as $\alpha_{(k)}=0.01/(k+2)$ and $\alpha_{(k)}=0.01/\sqrt{k+2}$, enabling larger steps early and progressively smaller ones near the solution.}
\begin{figure}
    \centering
    \includegraphics[height= 0.6\columnwidth ,width=\columnwidth]{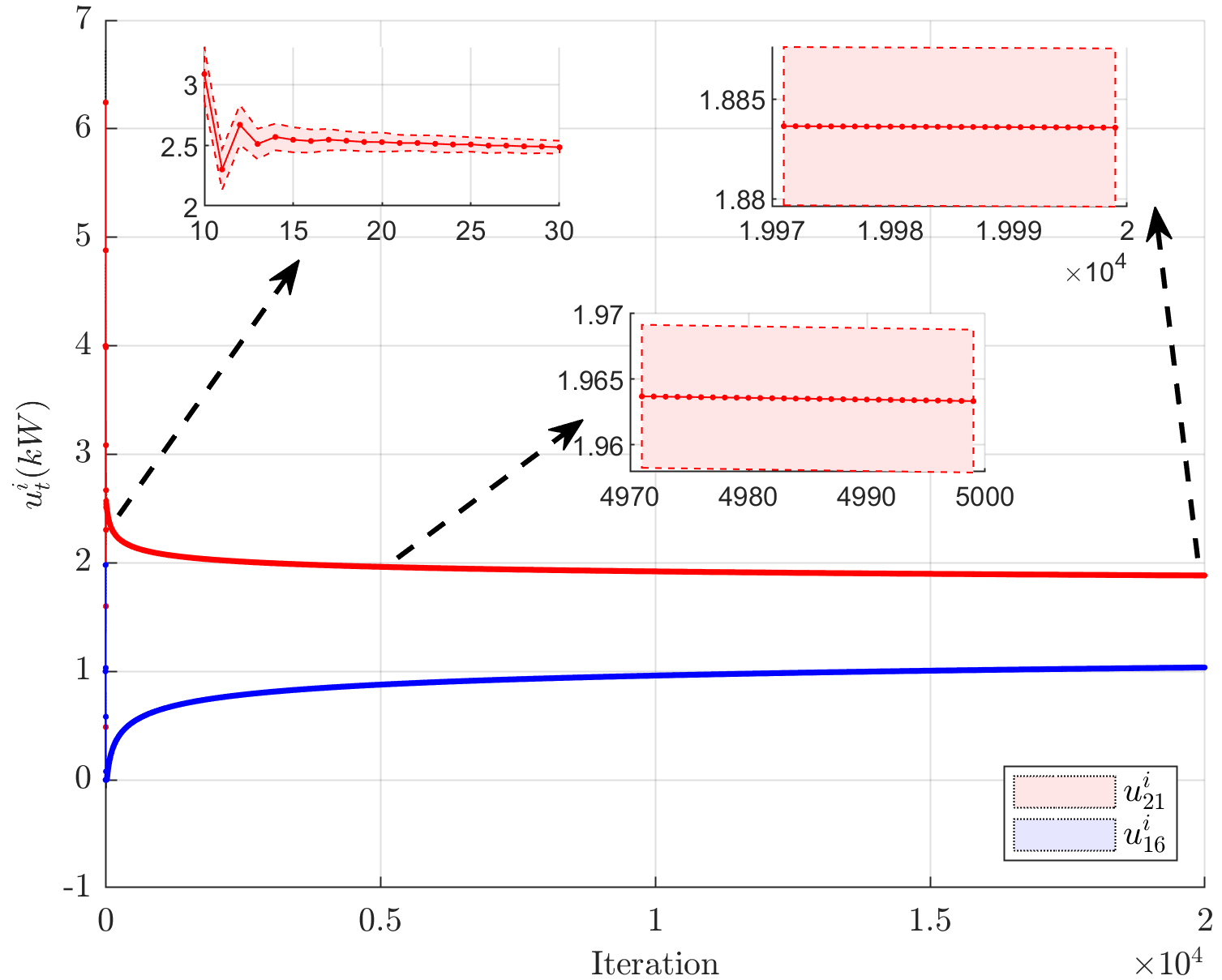}
    \caption{A visual representation of the  convergence of $u^{i}_{16},u^{i}_{21}$ during the execution of Algorithm~\ref{alg: suggested algorithm}.}
    \label{fig:convergence u}
    %\vspace{-0.3cm}
\end{figure}
\begin{figure}
    \centering
    \includegraphics[height= 0.6\columnwidth ,width=\columnwidth]{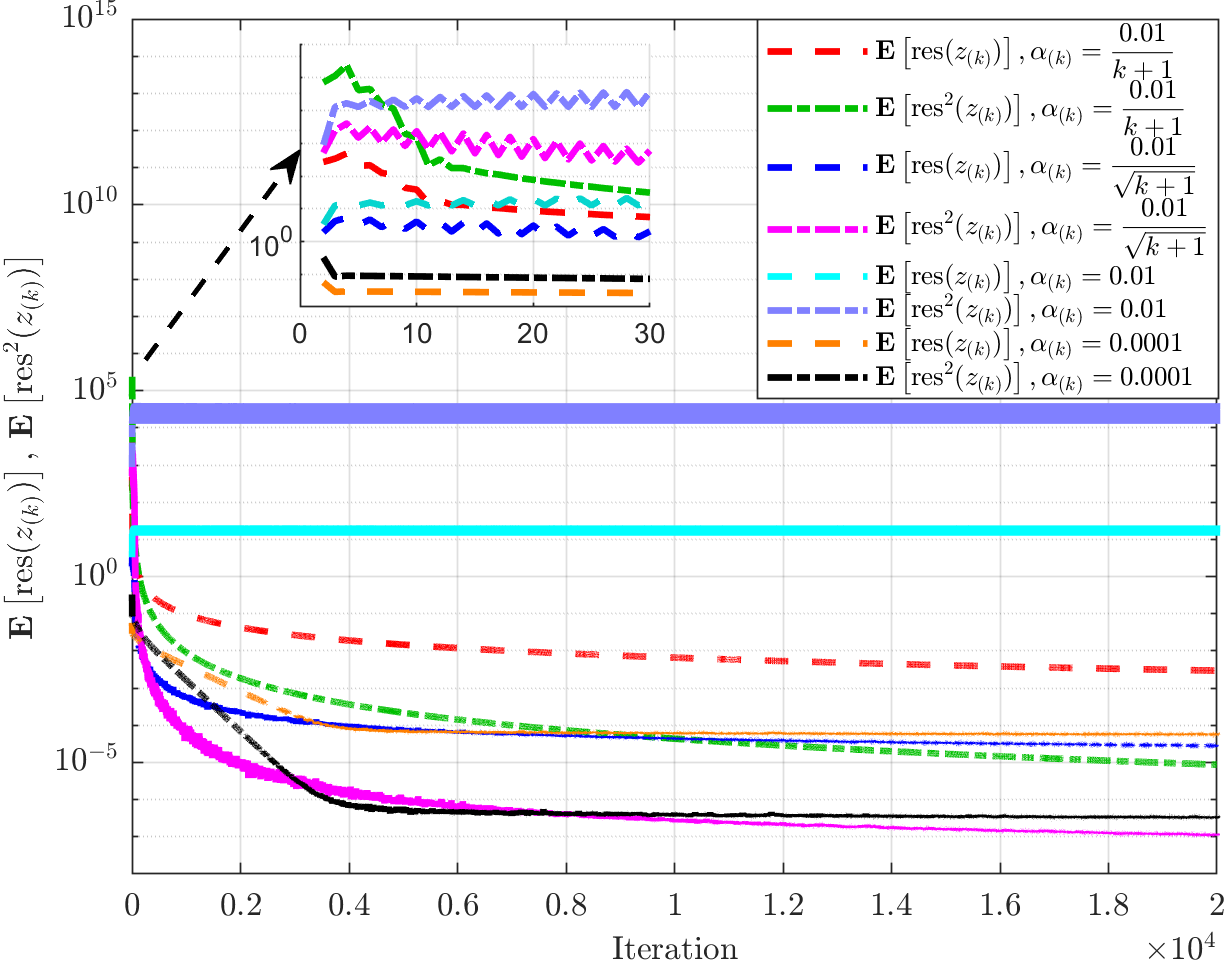}
    \caption{A visual representation of the  convergence of $\mathrm{res}(z)$ and $\mathrm{res}^{2}(z)$ during the execution of Algorithm~\ref{alg: suggested algorithm}.}
    \label{fig:res}
    %\vspace{-0.3cm}
\end{figure}
\section{Conclusion}
\label{sec:conculsion}
This paper studied stochastic generalized dynamic games (SGDGs) with coupling chance constraints under uncertainties satisfying a concentration of measure property. By under-approximating these constraints via expectation constraints, we established the existence of stochastic generalized Nash equilibria (SGNEs) and showed that any variational SGNE of the approximated game is an $\varepsilon$-SGNE of the original game. We proposed an SGNE-seeking algorithm for the approximated game under mere monotonicity and Lipschitz continuity of the pseudo-gradient, proving almost-sure convergence. \SYR{The study of stochastic generalized nonconvex nonlinear dynamic games under shared chance constraints (with unknown random variables) is an important direction for future research.}
\appendix
\subsection*{Proof of lemma~\ref{lem: main lemma about inequality of residual}}
\SYR{The proof is inspired by \cite{malitsky2020golden,franci2021stochastic}.}\\
(1) Let $R_k:=(\operatorname{Id}+\alpha_{(k)}\mathcal{B})^{-1}$.
By firm nonexpansiveness of $R_k$ and the update $z_{(k{+}1)}=R_k\bigl(\tilde{z}_{(k)}-\alpha_{(k)}\hat A(z_{k)})\bigr)$,
\begin{align*}
&\operatorname{res}(z_{(k)})
=\bigl\|z_{(k)}-R_k(z_{(k)}-\alpha_{(k)}A(z_{(k))})\bigr\|
\\ &\le\|z_{(k)}-z_{(k{+}1})\|+\|z_{(k{+}1)}-R_k(z_{(k)}-\alpha_{(k)}A(z_{(k))})\|.
\end{align*}
Inserting the intermediate point $R_k(\tilde{z}_{(k)}-\alpha(k)A(z(k)))$ and using nonexpansiveness of $R_k$ twice gives
$$
\|z_{(k{+}1)}-R_k(z_{(k)}-\alpha_{(k)}A(z_{(k)}))\|
\le \|\tilde{z}_{(k)}-z_{(k)}\|+\alpha_{(k)}\|e_{(k)}\|.
$$
Thus
$
\operatorname{res}(z_{(k)})
\le \|\Delta z_{(k{+}1)}\|+\|\tilde{z}_{(k)}-z_{(k)}\|+\alpha_{(k)}\|e_{(k)}\|.
$
Finally, by applying $\|a+b\|^2\le 2\|a\|^2+2\|b\|^2$ twice to the above inequality and then rearranging, we obtain \eqref{eq: inequality residual and varpi_k and others 1}.\\
(2) 
Let \( Q(u, \overline{\lambda}) = \sum_{i=1}^{N} \iota_{\mathcal{D}^{i}}(u^{i}) + \sum_{j=1}^{m} \iota_{\mathbb{R} \geq 0}(\overline{\lambda}^{j}) \), where \(\overline{\lambda}^{j}\) denotes the \(j^{\text{th}}\) entry of \(\overline{\lambda}\). Since $\mathcal{B}=\partial Q$, the resolvent–prox identity gives $R_{k}^{-1}=\mathrm{prox}_{\alpha_{(k)}Q}$. 
Applying prox-inequality \cite[Proposition 12.26]{bauschke2011} at iteration $k$ to update $z_{(k+1)}=\mathrm{prox}_{\alpha_{(k)}Q}\bigl(\tilde{z}_{(k)}-\alpha_{(k)}\hat A(z_{(k))}\bigr)$ 
,yields
\begin{align}
\label{eq: using proximal equivalence1 in algorithm-1}
-\left\langle z_{(k+1)} - \tilde{z}_{(k)} + \alpha_{(k)} 	\hat{\mathcal{A}}\left( z_{(k)}\right), \Delta z_{(k+1)}^{\star}   \right\rangle \nonumber\\ \geq \alpha_{(k)} \left(  Q(z_{(k+1)}) - Q(z_{\star}) \right).  
\end{align}
Applying the same argument to the update at iteration $k-1$
and using the identity $z_{(k)} - \tilde{z}_{(k-1)} = \frac{1}{\delta}\left( z_{(k)}-\tilde{z}_{(k)}  \right)$, which follows from \eqref{eq: a compact form of the algorithm in operator space-averaging step}, obtain
\begin{align}
\label{eq: equation using lemma 6 and 33}
    \left \langle  \frac{1}{\delta}\left(  z_{(k)} - \tilde{z}_{(k)} \right) + \alpha_{(k-1)} \hat{\mathcal{A}}\left( z_{(k-1)}   \right) , \Delta z_{(k+1)} \right \rangle \nonumber \\ \geq \alpha_{(k-1)} \left( Q\left( z_{(k)} \right)  - Q\left(z_{(k+1)}\right)  \right).
\end{align} 
Multiplying \eqref{eq: equation using lemma 6 and 33} by $\alpha_{(k)}/\alpha_{(k-1)}$ and adding to \eqref{eq: using proximal equivalence1 in algorithm-1}, and then expanding
$\widehat A(z_{(k)})=A(z_{(k)})+e_{(k)}$ together with
$\langle a,b\rangle=\tfrac12(\|a\|^{2}+\|b\|^{2}-\|a-b\|^{2})$,
one obtains after straightforward algebra
\begin{align}
\label{eq: main inequality first reformulation}
	&-\left\|      z_{\left( k+1 \right)}- \tilde{z}_{\left( k \right)} \right\|^{2} - 
	\left\| \Delta z_{\left( k+1 \right)}^{\star}\right\|^{2}  + \left\| \Delta \tilde{z}_{\left( k \right)}^{\star} \right\|^{2} -\frac{\alpha_{(k)}}{\alpha_{(k-1)}}\frac{1}{\delta}\nonumber \\
	& 
	\left\{\left\| z_{\left( k \right)} - \tilde{z}_{\left( k \right)} \right\|^{2} + \left\| \Delta z_{\left( k+1 \right)} \right\|^{2}
	- \left\| z_{\left( k+1 \right)} - \tilde{z}_{\left( k \right)}\right\|^{2}\right\}  \nonumber \\
	&- 2\alpha_{(k)}\bigg\{\left\langle \mathcal{A}\left( z_{\left( k \right)} \right),\Delta z_{\left( k \right)}^{\star} \right\rangle + \left\langle  \boldsymbol{e}_{\left( k \right)},\Delta z_{\left( k \right)}^{\star} \right\rangle \nonumber \\
	&+\left\langle  \mathcal{A}\left( z_{\left( k \right)}\right) - \mathcal{A}\left( z_{\left( k-1 \right)} \right)  ,\Delta z_{\left( k+1 \right)} \right\rangle  +\left\langle  \Delta \boldsymbol{e}_{\left( k \right)} ,\Delta z_{\left( k+1 \right)} \right\rangle\bigg\}\nonumber \\
	&
	\geq 2 \alpha_{(k)} \left( Q\left( z_{\left( k \right)} \right)-Q\left( z_{\star} \right) \right) .
\end{align}
From \eqref{eq: a compact form of the algorithm in operator space-averaging step} we have
    \begin{equation}
    \label{eq: first equation of algorithm in compact form after substituting k with k+1}
        \tilde{z}_{(k+1)} = \left( 1-\delta \right) z_{(k+1)} + \delta \tilde{z}_{(k)}. 
    \end{equation}
    Solving $z_{(k+1)}$ from \eqref{eq: first equation of algorithm in compact form after substituting k with k+1} and subtracting \(z_{\star}\) yields
    $ \Delta z_{(k+1)}^{\star} = \frac{1}{1-\delta}  \Delta \tilde{z}_{(k+1)}^{\star}  - \frac{\delta}{1-\delta} \Delta \tilde{z}_{(k)}^{\star}$.
Taking squared norms and expanding using cosine rule, gives
\begin{align}
\label{eq: norm delta zstark+1power2-1}
\left\|\Delta z_{(k+1)}^{\star}\right\|^{2}&=\frac{1}{(1-\delta)^{2}}\left\|\Delta \tilde{z}_{(k+1)}^{\star}\right\|^{2}+\frac{\delta ^{2}}{(1-\delta)^{2}}\left\| \Delta \tilde{z}_{(k)}^{\star}\right\|^{2} \nonumber \\&- 2 \frac{\delta}{(1-\delta)^{2}}\left\langle \Delta \tilde{z}_{(k+1)}^{\star},\Delta \tilde{z}_{(k)}^{\star}\right\rangle.
\end{align}
Applying \eqref{eq: first equation of algorithm in compact form after substituting k with k+1} to the last term of \eqref{eq: norm delta zstark+1power2-1} and rearranging  yields 
\begin{align}
\label{eq: eq jadid before eq instead of lemma3-3}
	&\left\| \Delta z_{\left( k+1 \right)}^{\star} \right\|^{2} 
        = \frac{1}{\left( 1-\delta \right)^{2}}\left\| \Delta \tilde{z}_{\left( k+1 \right)}^{\star} \right\|^{2}+\frac{\delta^{2}}{\left( 1-\delta \right)^{2}}\left\| \Delta \tilde{z}_{\left( k \right)}^{\star}\right\|^{2} \nonumber \\
	&-\frac{2\delta}{1-\delta}\left\langle   z_{\left( k+1 \right)}-\tilde{z}_{\left( k \right)}  , \Delta \tilde{z}_{\left( k \right)}^{\star}\right\rangle 
	 -\frac{2\delta}{\left( 1-\delta \right)^{2}}\left\| \Delta\tilde{z}_{\left( k \right)}^{\star} \right\|^{2}
\end{align}Moreover, using \eqref{eq: first equation of algorithm in compact form after substituting k with k+1}, we have 
    $
         \left\|  \Delta \tilde{z}_{(k+1)}^{\star}    \right\|^{2} 
        = \left\|  \left( 1-\delta \right)z_{(k+1)} + \delta \tilde{z}_{(k)} - z_{\star}   \right\|^{2} 
         =~\left( 1-\delta \right)^{2}  \left\|  z_{(k+1)} - \tilde{z}_{(k)}  \right\|^{2} + \left\| \Delta \tilde{z}_{(k)}^{\star} \right\|^{2} 
         + 2\left( 1-\delta \right) \langle  z_{(k+1)} - \tilde{z}_{(k)} , \Delta \tilde{z}_{(k)}^{\star} \rangle . 
    $
    Multiplying both sides by \(\frac{-\delta}{\left( 1 - \delta \right)^{2}}\) and rearranging gives
\begin{align}
\label{eq: instead of lemma3-3}
&\frac{- \delta}{\left(  1-\delta \right)^{2}}\left\| \Delta \tilde{z}_{\left( k+1 \right)}^{\star} \right\|^{2} + \delta\left\| z_{(k+1)}-\tilde{z}_{(k)} \right\|^{2} + \frac{\delta}{\left( 1-\delta \right)^{2}}\left\| \Delta \tilde{z}_{(k)}^{\star}  \right\|^{2}\nonumber\\ &= -2 \frac{\delta}{1-\delta}\langle  z_{(k+1)} - \tilde{z}_{(k)},\Delta \tilde{z}_{(k)}^{\star}  \rangle\end{align}
Now, combining \eqref{eq: eq jadid before eq instead of lemma3-3} and \eqref{eq: instead of lemma3-3} \SY{yields}
\begin{align}
\label{eq: instead of lemma 3(7)}
    \left\| \Delta z_{\left( k+1 \right)}^{\star} \right\|^{2}
    = \frac{1}{1-\delta}\left\| \Delta \tilde{z}_{\left( k+1 \right)}^{\star} \right\|^{2} - \frac{\delta}{1-\delta}\left\| \Delta \tilde{z}_{\left( k \right)}^{\star} \right\|^{2} \nonumber \\+ \delta \left\| z_{\left( k+1 \right)}-\tilde{z}_{\left( k \right)} \right\|^{2}.
\end{align}
Substituting $\left\|\Delta z_{(k+1)}^{\star} \right\|^2$ from \eqref{eq: instead of lemma 3(7)} into \eqref{eq: main inequality first reformulation} yields
\begin{align}
\label{eq: main inequality second reformulation}
	& \frac{1}{1-\delta}\left\| \Delta \tilde{z}_{\left( k+1 \right)}^{\star} \right\|^{2} + 
	\frac{\alpha_{(k)}}{\alpha_{(k-1)}}\frac{1}{\delta}\left\| \Delta z_{\left( k+1 \right)} \right\|^{2} \nonumber \\
	&\quad +2\alpha_{(k)} \left( Q\left( z_{\left( k \right)} \right)-Q\left( z_{\star} \right) \right) 
        +2\alpha_{(k)}\left\langle  \mathcal{A}\left( z_{\left( k \right)} \right),\Delta z_{\left( k \right)}^{\star}     \right\rangle \nonumber\\
	&\leq 
	\left( \frac{\delta}{1-\delta} \right)\left\|    \Delta \tilde{z}_{\left( k \right)}^{\star} \right\|^2 - \frac{\alpha_{(k)}}{\alpha_{(k-1)}}\frac{1}{\delta}\left\|   z_{\left( k \right)}-\tilde{z}_{\left( k \right)}   \right\|^2\nonumber \\
     	&\quad +\left( -1-\delta+\frac{\alpha_{(k)}}{\alpha_{(k-1)}}\frac{1}{\delta} \right)\left\|    z_{\left( k+1 \right)} -\tilde{z}_{\left( k \right)}  \right\|^{2} \nonumber \\
	&\quad -2 \alpha_{(k)}\left\langle    \mathcal{A}\left( z_{\left( k \right)}\right) -\mathcal{A}\left( z_{\left( k-1 \right)}\right),\Delta z_{(k+1)}     \right\rangle \nonumber \\
	&\quad - 2\alpha_{(k)}\left\langle    e_{\left( k \right)},\Delta z_{\left( k \right)}^{\star}    \right\rangle 
        -2\alpha_{(k)}\left\langle   \Delta e_{\left( k \right)}    ,\Delta z_{\left( k+1 \right)}        \right\rangle .  
\end{align}
Since $\mathcal{A}$ is monotone (\textit{Lemma~\ref{lem: equivalence of answer distributed and centralized algorithm (KKT)}}(\ref{lem: monotonicity of A and Lipschitz})), and $Q\left(z_{(k)}\right) \geq Q\left(z_{\star}\right)$,
\begin{align}
\label{eq: eq resulted from monotonicity of A}
	&\left\langle \mathcal{A}\left( z_{\left( k \right)} \right),\Delta z_{\left( k \right)}^{\star} \right\rangle 
	\geq 
	\left\langle        \mathcal{A}\left( z_{\star} \right),\Delta z_{\left( k \right)}^{\star}         \right\rangle \iff \nonumber \\
	&\left\langle \mathcal{A}\left( z_{\left( k \right)} \right),\Delta z_{\left( k \right)}^{\star} \right\rangle +  \left( Q\left( z_{\left( k \right)}\right) - Q\left( z_{\star} \right) \right) 
	\geq \nonumber \\
	&\quad \geq\left\langle        \mathcal{A}\left( z_{\star} \right),\Delta z_{\left( k \right)}^{\star}         \right\rangle +  \left( Q\left( z_{\left( k \right)}\right) - Q\left( z_{\star} \right) \right)	\geq 0.
\end{align}
Given \textit{Assumption~\ref{ass: averaging parameter}} and \textit{Assumption~\ref{ass: step size alpha}}, it can be easily shown that 
$\left(-1-\delta+ \frac{\alpha_{(k)}}{\alpha_{(k-1)}}\frac{1}{\delta}\right)$ is non-positive.
Now, based on this fact and using \eqref{eq: main inequality second reformulation} and \eqref{eq: eq resulted from monotonicity of A}, we can find
\begin{align}
\label{eq: main inequality third reformulation}
    & \frac{1}{1-\delta}\left\| \Delta \tilde{z}_{\left( k+1 \right)}^{\star} \right\|^{2} + 
	\frac{\alpha_{(k)}}{\alpha_{(k-1)}}\frac{1}{\delta}\left\| \Delta z_{\left( k+1 \right)} \right\|^{2} \nonumber \\
	&\leq 
	\frac{1}{1-\delta} \left\|   \Delta \tilde{z}_{\left( k \right)}^{\star} \right\|^2 - \frac{\alpha_{(k)}}{\alpha_{(k-1)}}\frac{1}{\delta}\left\|   z_{\left( k \right)}-\tilde{z}_{\left( k \right)}   \right\|^2 \nonumber \\
	&\quad -2  \alpha_{(k)}\left\langle \mathcal{A}\left( z_{\left( k \right)}\right) -\mathcal{A}\left( z_{\left( k-1 \right)}\right), \Delta z_{(k+1)}     \right\rangle \nonumber \\
	&\quad -2 \alpha_{(k)}\left\langle       e_{\left( k \right)},\Delta z_{\left( k \right)}^{\star}    \right\rangle 
        - 2\alpha_{(k)}\left\langle   \Delta e_{\left( k \right)} , \Delta z_{\left( k+1 \right)}        \right\rangle.
\end{align}
By using Lipschitz continuity of $\mathcal{A}$ (\textit{Lemma~\ref{lem: equivalence of answer distributed and centralized algorithm (KKT)}} (\ref{lem: monotonicity of A and Lipschitz})), Cauchy-Schwartz and Young's inequality, we obtain that
$-\left\langle    \mathcal{A}\left( z_{\left( k \right)} \right)-\mathcal{A}\left( z_{\left( k-1 \right)}   \right) , \Delta z_{(k+1)}     \right\rangle             
    \leq 
    \frac{ \ell_{\mathcal{A}}}{2} \left( \left\|      \Delta z_{\left( k \right)}       \right\|^{2}    +     \left\|      \Delta z_{\left( k+1 \right)}       \right\|^{2} \right)$
and
	$- \left\langle     \Delta e_{\left( k \right)},\Delta z_{\left( k+1 \right)}   \right\rangle 
        \leq     
      \frac{1}{2}\left( \left\| \Delta e_{(k)} \right\|^2 + \left\| \Delta z_{\left( k+1 \right)}   \right\|^{2} \right)$ .
Substituting  into \eqref{eq: main inequality third reformulation}, and then 
Multiplying both sides by $2$ and subtracting the left-hand side of the result from $\tfrac{\alpha_{(k)}}{\alpha_{(k-1)}}\tfrac{1}{\delta}\|\Delta z_{(k+1)}\|^{2}$ yields \eqref{eq: main inequality third reformulation0}.
\subsection*{Proof of Proposition~\ref{Proposition: variance bound of each error}}
\SYR{
Fix $k \in \mathbb{N}$. For $i\in\{1,2,3\}$ define the processes
\(
d_{i,(k),[m']} := 
\sum_{l=1}^{m'} \frac{e_{i,(k),[l]}}{M_{(k)}}\), for \(1\leq m' \leq M_{(k)}\), and $d_{i,(k),[0]} = 0$. And, define  $\mathcal{F}^{\mathrm{d}}_{(k),[m']} = \sigma(w_{(k)[1]},w_{(k),[2]},\cdots,w_{(k),[m']}) $ .
Then $\{d_{i,(k),[m']}(u),\widehat{\mathcal{F}}_{(k),[m']}\}_{m'=0}^{M_{(k)}}$ is a martingale with increments
\(
\Delta d_{i,(k),[m']}:= d_{i,(k),[m']}-d_{i,(k),[m'-1]}
= \frac{e_{i,(k),[m']}}{M_{(k)}}
\).
By \textit{Assumption~\ref{ass: bounded of variances of errors}},
\(
\mathbb{E}\bigl[\|\Delta d_{i,(k),[m']}\|^{2}\bigr]
   \le \frac{4\sigma_i^2}{M_{(k)}^{2}}
\).
Since $d_{i,(k),[M_{(k)}]}=e_{i,(k)}$, applying Doob’s Inequality \cite[Section 4.4]{durrett2019probability} and some norms property, such as $\|\cdot\|\le \|\cdot\|_{F}$, yields the claimed bound, which proves the proposition.
}
\section*{References}
\bibliographystyle{ieeetr}
\bibliography{main}
\newpage
\addtolength{\textheight}{-12cm} 
\end{document}